\definecolor{asparagus}{rgb}{0.53, 0.66, 0.42}
\newtheorem{thm}{Theorem}[section]
\newtheorem{lem}[thm]{Lemma}
\newtheorem{prop}[thm]{Proposition}
\newtheorem{cor}[thm]{Corollary}
\theoremstyle{definition}
\newtheorem{defn}[thm]{Definition}
\newtheorem{ex}[thm]{Example}
\theoremstyle{remark}
\newtheorem{rem}[thm]{Remark}
\DeclareMathOperator{\tr}{tr} 
\DeclareMathOperator{\cov}{\mathbb{V}}
\DeclareMathOperator{\argmin}{argmin} 
\DeclareMathOperator{\K}{\mathbf{K}}
\newcommand{\E}{\mathbb{E}}
\newcommand{\cE}{\mathcal{E}}
\newcommand{\cS}{\mathcal{S}}
\newcommand{\F}{\mathbb{F}}
\renewcommand{\P}{\mathbb{P}}
\newcommand{\R}{\mathbb{R}}
\newcommand{\cX}{\mathcal{X}}
\newcommand{\bs}{\boldsymbol}
\newcommand{\mtp}{{\text{MTP}}_{2}}
\newcommand{\cd}{\,|\,}
\newcommand{\<}{\langle}
\renewcommand{\>}{\rangle}
\renewcommand{\-}{\setminus}
\newcommand{\ep}{{\rm A}} 
\newcommand{\mpl}{{\rm M}_+}
\newcommand{\bic}{\textsc{BIC}}
\begin{document}

\begin{frontmatter}
\title{Locally associated graphical models \\ and mixed convex exponential families}
\runtitle{Locally associated graphical models}

\begin{aug}
\author[A]{\fnms{Steffen} \snm{Lauritzen}\ead[label=e1]{lauritzen@math.ku.dk}}
\and
\author[B]{\fnms{Piotr} \snm{Zwiernik}\ead[label=e2]{piotr.zwiernik@utoronto.ca}}
\address[A]{University of Copenhagen, \printead{e1}}

\address[B]{University of Toronto, \printead{e2}}
\end{aug}

\begin{abstract}
The notion of multivariate total positivity has proved to be useful in finance and psychology but may be too restrictive in other applications. In this paper we propose a concept of local association, where highly connected components in a graphical model are positively associated and study its properties. Our main motivation comes from gene expression data, where graphical models have become a popular exploratory tool. The models are instances of what we term \emph{mixed convex exponential families} and we show that a \emph{mixed dual likelihood estimator} has simple exact properties for such families as well as asymptotic properties similar to the maximum likelihood estimator. We further relax the positivity assumption by penalizing negative partial correlations in what we term the \emph{positive graphical lasso}. Finally, we develop a GOLAZO algorithm based on  block-coordinate descent that applies to a number of optimization procedures that arise in the context of graphical models, including the estimation problems described above. We derive results on existence of the optimum for such problems. 
\end{abstract}

\begin{keyword}[class=MSC]
\kwd[Primary ]{62H05}
\kwd{62H12}
\kwd[; secondary ]{62H22}
\end{keyword}

\begin{keyword}
\kwd{association} 
\kwd{convex optimization}
\kwd{dual likelihood}
\kwd{exponential families}
\kwd{Gaussian distribution}
\kwd{graphical lasso}
\kwd{Kullback--Leibler divergence}
\kwd{mixed parametrization}
\kwd{positive correlations}
\kwd{structure learning}
\end{keyword}

\end{frontmatter}

\section{Introduction and summary}

It has been illustrated recently in a number of publications that explicitly incorporating positive dependence constraints can be useful for modelling in various contexts where components are naturally positively associated (e.g.\ finance or psychology) \cite{LUZ,agrawal2019covariance,MTP2Markov2015,lauritzen2019total,slawski2015estimation}. The main distinctive feature of this line of work as opposed to more classical literature on positive dependence is that they link to techniques used in high-dimensional statistics and graphical models using the positivity constraint as an implicit regularizer. 

In the Gaussian case, a natural positivity constraint is that all partial correlations are non-negative or, equivalently, the inverse covariance matrix is an M-matrix, $(\Sigma^{-1})_{ij}\leq 0$ for all $i\neq j$. Optimizing a loss function under this restriction typically results in a sparse estimate, which was  the driving idea in \cite{slawski2015estimation}. For standard stock market datasets this may lead to an estimate that gives both a sparser graph and a higher value of the likelihood function than estimates from the graphical lasso approach \cite{rossell2020dependence}. 

Although useful, this global positivity constraint is often too restrictive. In this paper we propose and study natural relaxations of the condition. With an underlying graph representing the dependence structure between the variables, we will require that highly connected components are positively dependent, in the precise sense that variables in the same clique are \emph{associated} \cite{esary1967association}, which in the Gaussian case is equivalent to having a covariance matrix with positive entries \cite{pitt1982positively}. Unfortunately, maximum likelihood estimation (MLE) in this type of model is problematic as the likelihood function may get complicated. The model is an instance of what we term a \emph{mixed convex exponential family}. We develop an associated \emph{mixed dual estimator} (MDE) which overcomes the problems faced by the MLE. The MDE can be found by solving two convex optimization problems, and has asymptotic properties similar to the MLE. We note that Maliutov \emph{et al.} \cite{malioutov2006walk} suggest a different relaxation of the $\mtp$ condition, unrelated to ours. 

In this situation, when the underlying graph is not known, we also consider a further relaxation of local association through what we name the \emph{positive graphical lasso} which penalizes large negative partial correlations. As for the now classical graphical lasso, this will typically identify a sparse structure. 

\subsection{A motivating example}
A motivating problem is the exploratory analysis of gene expression data. The graphical lasso has become a standard technique for estimating gene expression networks. While constructing and interpreting the network, researchers often focus on positive co-expression (e.g. \cite{mason2009signed}), where pairs of genes show a proportional expression pattern across samples. Also, in various scenarios it has been observed that positively co-expressed genes within the same pathway tend to cluster close together in the pathway structure, while negatively correlated genes typically occupy more distant positions; see e.g.\  \cite{kharchenko2005expression,wei2006transcriptional}. 
 We shall later, in Section~\ref{sec:gene} analyse publicly available microarray expression data profiling umbilical cord tissue in a study of fetal inflammatory response (FIR); cf.\ \cite{cohen2007perturbation,costa2016umbilical}. 
From an initial set of 12,093 genes with reliable expression, Costa and Castelo \cite{costa2016umbilical} identified 1,097 as differentially expressed between FIR-affected and unaffected infants, from which 592 were upregulated in FIR. This subset of 592 upregulated genes was significantly enriched by 136 genes involved in the innate immune response (\cite{breuer2013innatedb}) and we shall focus our analysis on this subset of 136 genes.

A typical approach to explicitly model positive co-expression is by building weighted gene co-expression  networks where correlations are mapped monotonely from $[-1,1]$ to $[0,1]$ and then thresholded. This approach is subject to standard problems with  building co-expression networks based on correlations alone not taking the effect of other genes into account. Our approach is based on partial correlations and so may provide more meaningful estimates of the underlying network.

\subsection{An optimization algorithm}
An important first step in modelling large systems that satisfy some positive dependence constraint is to reduce to a sparser representation without loosing the positive dependence information. In this respect, $\ell_1$-regularized approaches do not work well, since they treat positive and negative partial correlations in an equal manner. In this paper we propose a version where only negative partial correlations are penalized. 

This approach is developed further in a general GOLAZO\footnote{Pronounced \emph{goh-lah-soh}, like the  Spanish word \emph{golazo} but without the Castilian lisp.} algorithm (Graphical Oriented LAZy Optimization) 
where a penalty of the form $$\sum_{i\neq j} \max\{L_{ij}K_{ij},U_{ij}K_{ij}\}$$ is employed to obtain sparse estimation of $K=\Sigma^{-1}$. Here the penalty parameters $L,U$ satisfy $-\infty\leq L_{ij}\leq 0\leq U_{ij}\leq +\infty$; including zero and infinite values solves several optimization problems proposed in this paper and a number of related problems in graphical models. The advantage of our general approach is that it provides a detailed analysis of convergence and existence of the optimum. 

\subsection{Main contributions and structure of paper}
The main contributions in this paper are
\begin{enumerate}
    \item [(i)] A class of Gaussian graphical models with relaxation of positivity restrictions, either via a graph or via a positive lasso-type penalty, or both;
    \item [(ii)] A general framework for mixed convex exponential families with an associated method of estimation that has asymptotic properties similar to maximum likelihood but more favourable computational properties;
    \item [(iii)] The GOLAZO algorithm for solving a range of estimation problems associated with graphical models, including positivity restrictions. The associated R package \texttt{golazo} is available on GitHub.
\end{enumerate}
The structure of the paper is as follows. In Section~\ref{sec:loc_ass} we study the basic properties of locally associated distributions and discuss their relation to other models involving positivity. 
Section~\ref{sec:mixed} introduces and studies mixed convex exponential families. The mixed dual estimator is introduced in Section~\ref{sec:estimation} and applied to locally associated Gaussian graphical models (laGGMs) in Section~\ref{sec:EP}. 
The asymptotic properties of the estimator are established in Section~\ref{sec:asym}.
In Section~\ref{sec:pglasso} we focus on learning the structure of a laGGM and introduce the positive graphical lasso for this purpose. We derive the GOLAZO algorithm in Section~\ref{app:golazo} and argue that a number of optimization problems in graphical models can be seen as special instances and hence solved by this algorithm. 
Section~\ref{sec:appls} concludes the paper by applying the methodology to two  datasets.

\section{Locally associated distributions}\label{sec:loc_ass}

In this section we define local association and locally associated Gaussian graphical models linking to other relevant statistical models.

\subsection{Definition and basic properties}

We say that a function $f:\R^d\to \R$ is increasing if $x\leq x'$ (coordinatewise) implies that $f(x)\leq f(x')$. A $d$-dimensional random vector $X$ is (positively) \emph{associated} if for any two increasing functions $f,g:\R^d\to \R$, the  covariance $\cov(f(X),g(X))$ is non-negative; for basic properties of this notion see \cite{esary1967association}. 

In general, association is hard to check and \cite{KarlinRinott80} introduced the strictly stronger notion of multivariate total positivity  which in the Gaussian case is equivalent to the covariance $\Sigma$ being an inverse M-matrix \cite{karlinGaussian,LUZ}. 
We recall that $K=\Sigma^{-1}$ is an M-matrix if it is positive definite and $K_{ij}\leq 0$ for all $i\neq j$; so this condition corresponds to assuming that all partial correlations are non-negative. 

In the Gaussian case there is a simple condition for association, as stated in the theorem below:
\begin{thm}[Pitt \cite{pitt1982positively}]\label{thm:gauss_ass}
		Suppose $X$ is a Gaussian vector with covariance matrix $\Sigma$. It then holds that
		$
		X$ is associated if and only if $\Sigma_{ij}\geq 0$ for all $i,j$.
\end{thm}
Positive semi-definite matrices with non-negative  entries are also called doubly non-negative matrices.
If $\Sigma$ is an inverse M-matrix, then it is doubly non-negative, but the reverse implication does not hold. 

The requirement that a distribution is associated is  strong and we wish to relax this in relation to a local structure given by a graph. We build  on the standard terminology for undirected and directed graphs as given, for example, in \cite{lau96}. If $G=(V,E)$ is an undirected  graph with vertex set $V=\{1,\ldots,d\}$ and edge set $E$, a \emph{complete subset} of $G$ is any subset $C$ of $V$ such that any two vertices  $i,j\in C$ are connected by an edge, that is, $ij\in E$. A  \emph{clique} is a complete subset  that is maximal with respect to inclusion. Let $X=(X_1,\ldots,X_d)$ be a random vector and fix a graph $G=(V,E)$. For any $C\subseteq V$, by $X_C$ denote the subvector of $X$ with entries $X_i$, $i\in C$. 
\begin{defn}\label{def:locass}
 The random vector $X$ is said to be \emph{locally associated} w.r.t.\ $G$ if  it holds for any clique $C$ of $G$ that the subvector $X_C$ is associated. 
\end{defn}

Denote by $\ep(G)$ the set of covariance matrices of Gaussian vectors that are locally associated with respect to $G$. By Theorem~\ref{thm:gauss_ass} these are the positive definite matrices $\Sigma$ such that $\Sigma_{ij}\geq 0$ for all edges $ij\in E$.

\subsection{Locally associated Gaussian graphical models}\label{sec:laggm}

Our main interest lies in locally associated distributions for Gaussian graphical models. We say that a distribution of a random vector $X$ is \emph{Markov} with respect to $G$, or $M(G)$, if $X$ satisfies global Markov properties over the graph $G$; for more on graphical models see \cite{lau96}. A Gaussian vector $X$ with covariance matrix $\Sigma$ is Markov with respect  to $G$ if $(\Sigma^{-1})_{ij}=0$ for all $ij\notin E(G)$. The distributions that are Markov and locally associated with respect to $G$ are denoted by $\mpl(G)$. We refer to $\mpl(G)$ as a \emph{locally associated Gaussian graphical model (locally associated GGM)}. By definition we have 
$$\mpl(G)\;\;=\;\;\ep(G)\cap M(G).$$

 The set of locally associated Gaussian distributions that are Markov with respect to a graph is given as the intersection of a set that is convex in $K$ with  a set that is convex in $\Sigma$. The intersection is typically neither convex in $K$ nor in $\Sigma$ but a locally associated GGM is an instance of what we shall term a \emph{mixed convex exponential family}; see Section~\ref{sec:mixed} below.

\subsection{Positive linear systems and factor models}

In this section we link locally associated Gaussian graphical models to a broad class of models that includes, for example, factor analysis models with non-negative loadings. 

Recall that a Gaussian model over a directed acyclic graph (DAG) $D$ has the linear structural representation
$$
Y_i\;=\;\sum_{i\to j} \lambda_{ij} Y_j+\epsilon_i\qquad\mbox{for all }i\in V,
$$
where $(\epsilon_1,\ldots,\epsilon_d)$ is a mean-zero vector of Gaussian independent noise terms and $\lambda_{ij}\in \R$. Write $M(D)$ for the class of all such distributions parameterized by $\Lambda=[\lambda_{ij}]$ and the variances of $\epsilon_i$. Moreover, $M_+(D)$ denotes the subset of $M(D)$ where $\Lambda\geq 0$, i.e.\ where all regression coefficients are non-negative. 
\begin{prop}\label{prop:DAGass}
    Suppose that the distribution of a zero-mean Gaussian $Y$ lies in $M_+(D)$. Then  $Y$ is associated and so is each margin of $Y$. 
\end{prop}
\begin{proof}
Since $(I-\Lambda)Y=\epsilon$ then $\cov(Y)=(I-\Lambda)^{-1}\Omega(I-\Lambda)^{-T}$, where $\Omega$ is a diagonal matrix with the variances of $\epsilon$ on the diagonal and $(I-\Lambda)$ being invertible by acyclicity of $D$. Since $(I-\Lambda)^{-1}=I+\Lambda+\Lambda^2+\ldots$ and $\Lambda\geq 0$ we conclude that $\cov(Y)$ has only non-negative entries. This 
concludes the proof. 
\end{proof}
Proposition~\ref{prop:DAGass} shows that associated distributions contain the interesting family of positive DAG models $M_+(D)$, potentially with some nodes unobserved. Factor analysis models with non-negative loadings form a particular instance of margins of DAG models of the form $M_+(D)$. Recall that the  factor analysis model $\F_{d,s}$  is the family  of multivariate Gaussian distributions with an arbitrary mean and whose covariance matrix $\Sigma$ is of the form $\Sigma=\Delta+\Lambda\Lambda^T$ with a positive diagonal matrix $\Delta$ and $\Lambda\in \R^{d\times s}$. We write  $\F_{d,s}^+$ if the loading matrix $\Lambda$ is restricted to have non-negative entries.

One of the  standard arguments for why $\mtp$ distributions may be useful in statistical modelling is that they  contain the one factor model with non-negative loadings, $\F_{d,1}^+$. This observation and the corresponding link to the  Capital Asset Pricing Model was used in \cite{agrawal2019covariance} to argue why $\mtp$ distributions are particularly suitable for modelling financial data. However, it is easy to show by explicit examples that distributions in $\F_{d,s}^+$ for $s>1$ need not be $\mtp$. In this context, the fact that all of them are (globally) associated may provide useful regularization procedures in applications where factor analysis models with non-negative loadings are expected to perform well; see Section~\ref{sec:appls} for some evidence.

\subsection{Gaussian distributions and Gaussian copulas}

Since association is preserved after applying a strictly increasing function $\phi_i:\R\to \R$ to each $X_i$, our definition of local association naturally extends to Gaussian copula models, as in Proposition~\ref{prop:cop} below. A $d$-dimensional random vector $Y$ has a non-paranormal distribution if there exist strictly  increasing functions $\phi_i:\R\to \R$ for $i=1,\ldots,d$ such  that the vector $\phi(Y):=(\phi_1(Y_1),\ldots,\phi_d(Y_d))$ has a Gaussian distribution. 
\begin{prop}\label{prop:cop} If $G=(V,E)$ is an undirected graph and $Y$ has a non-paranormal distribution then $Y$ is locally associated with respect to $G$ if and only if $\phi(Y)$ is in $\ep(G)$. Moreover,  $Y$ is Markov with respect to $G$ if and only if $\phi(Y)$ is.
\end{prop}

\section{Mixed convex exponential families}\label{sec:mixed}

It is useful to see locally associated Gaussian graphical models as a special case of a more general type of models. Consider a random variable $X$ with values in a general state-space $\cX$. Suppose that the distribution of $X$ is in a minimally represented regular exponential family $\cE=\{P_{\bs\theta}\cd\bs\theta\in \Theta\}$ with canonical statistic $\bs t:\cX\mapsto \R^k$ and canonical parameter $\bs \theta$. This means that the density function $p(\bs x;\bs \theta)$ of the distribution $P_{\bs \theta}$ with respect to some underlying measure $\nu$ on $\cX$ takes the form
\begin{equation}\label{eq:expfam}
p(\bs x;\bs \theta)\;=\;\exp\{\<\bs \theta,\bs t(\bs x)\>-A(\bs \theta)\}\qquad \mbox{for }\bs\theta\in \Theta,
\end{equation}
 where $\nu\{x:\<\lambda,t(x)\>=c\}=0$ if $\lambda\neq 0$. 
The space of canonical parameters
$$\Theta\;:=\;\left\{\bs\theta\in \R^k: \int_\cX \exp\big\{\<\bs\theta,\bs t(\bs x)\>\big\}\,\nu({\rm d} \bs x)<\infty\right\}$$ is an open set in $\R^k$ and the cumulant function $A:\Theta\to \R$ is strictly convex and smooth. The map $\mu$ between canonical parameter $\bs \theta\in \Theta$ and the mean parameter $\bs \mu\in M$ satisfies $$\mu(\bs \theta)=\nabla A(\bs \theta)$$ and  establishes a bijection between $\Theta$ and $M$. Moreover, $M$ is the interior of the convex hull of $\bs t(\cX)$; see any of the references \cite{barndorff:78,brown:86,sundberg} for more details. The inverse map is denoted by $\theta$, that is, $\bs \theta=\theta(\bs \mu)$.

Suppose we split the sufficient statistics into two subvectors $\bs t(\bs x)=(\bs u,\bs v)$ of dimension $r,s$ where $r+s=k$. Let $\bs \theta=(\bs \theta_u,\bs \theta_v)$, $\bs \mu=(\bs \mu_u,\bs \mu_v)$ be the corresponding splits in the canonical and in the mean parameter. In analogy with $\bs\mu =\mu(\bs\theta)$, we also use the notation $\mu_u(\bs\theta)=\bs \mu_u$, $\mu_v(\bs\theta)=\bs \mu_v$, $\theta_u(\bs\mu)=\bs\theta_u$, and $\theta_v(\bs\mu)=\bs\theta_v$. For example, $\mu_u(\bs\theta)$ is the composition of $\mu:\Theta\to M$ with the projection $(\bs\mu_u,\bs\mu_v)\mapsto \bs \mu_u$.

By  \cite[Theorem~8.4]{barndorff:78}, the pair $(\bs \mu_u,\bs \theta_v)$ forms an alternative  parametrization for the exponential family $\cE$ called the \emph{mixed parametrization}. The parameters $\bs \mu_u$ and $\bs \theta_v$ are \emph{variation independent}, that is, the parameter space for $(\bs\mu_u,\bs\theta_v)$ is the Cartesian product space $M_{u}\times \Theta_v$ where $\Theta_v$ is the projection of $\Theta$ on $\bs \theta_v$ and $M_u$ is the projection of the space of mean parameters $M$ onto $\bs\mu_u$. So we may without ambiguity write
\[\cE=\{P_{\bs\theta}\cd\bs\theta\in \Theta\}=\{P_{\bs\mu}\cd\bs\mu\in M\}=\{P_{(\bs\mu_u,\bs\theta_v)}\cd\bs\mu_u\in M_u,\bs\theta_v\in \Theta_v\}.\]
We may thus consider the model $\cE'$ obtained from $\cE$ by a convex restriction on $\bs \mu_u$ and a convex restriction on $\bs \theta_v$. More precisely, we define:
\begin{defn}\label{def:mixed_convex}
{Fix a mixed parametrization $(\bs \mu_u,\bs \theta_v)\in M_u\times \Theta_v$ of the exponential family $\cE$. The model $\cE'$ is called a \emph{mixed convex} submodel of $\cE$ and a \emph{mixed convex exponential family} if it consists of all mixed parameters $(\bs \mu_u,\bs \theta_v)\in M_u'\times \Theta_v'$, where $M_u'$ and $\Theta_v'$ are convex and relatively closed subsets of $M_u$ and $\Theta_v$ respectively. We say that $\cE'$ is a \emph{mixed linear exponential family}, if both of $M_u'$ and $\Theta_v'$ are given by affine restrictions.}
\end{defn}

It is useful to introduce the following notation:
\begin{equation}\label{eq:cucv}
C_u\;=\;\{\bs\mu\in M:\;\bs\mu_u\in M_u'\},\qquad C_v\;=\;\{\bs\theta\in \Theta:\;\bs\theta_v\in \Theta_v'\}.
\end{equation}
The mixed convex exponential family is then given as an intersection $\cE'=\cE_u\cap\cE_v\subseteq \cE$, where $\cE_u=\{P_{\bs\mu}\cd  \bs\mu\in C_u\}$ and $\cE_v=\{P_{\bs\theta}\cd  \bs\theta\in C_v\}$ (one model is convex in the mean parametrization and the other is convex in the canonical parametrization). Note that these restrictions are also variation independent so that 
\[\cE'=\{P_{(\bs\mu_u,\bs\theta_v)}\cd\bs\mu\in C_u,\bs\theta\in C_v\}= \{P_{(\bs\mu_u,\bs\theta_v)}\cd(\bs\mu_u,\bs\theta_v)\in M'_u\times \Theta'_v\}.\]

We now discuss a few examples of this. Recall that the family of multivariate Gaussian distributions with zero mean and unknown covariance matrix $\Sigma$ is indeed a regular exponential family with inner product $\langle A,B\rangle=\tr(AB)$ and 
$$\bs t(\bs x) = -\bs x\bs x^T/2,\quad \bs\theta= K,\quad  \bs\mu=-\Sigma/2, \quad A(K)=-\frac 1 2 \log \det K.$$ 
The space of canonical parameters is the cone of positive definite matrices and the space of mean parameters is the cone of negative definite matrices. 
\begin{ex}\label{ex:pos-assoc}
Fix a graph $G=(V,E)$ on $d$ vertices $V=\{1,\ldots,d\}$ and consider the family of $d$-variate mean zero Gaussian distributions. We split the sufficient statistics into $\bs u=(-x_ix_j/2)_{ij\in E}$ and $\bs v=(-x_ix_j/2)_{ij\notin E}$. The diagonal entries $-x_i^2/2$ are included in $\bs u$. Then $\bs \mu_u=(-\Sigma_{ij}/2)_{ij\in E}$ and $\bs\theta_v=(K_{ij})_{ij\notin E}$. We may consider a mixed convex family given by  $\bs \mu_u\leq 0$ and $\bs\theta_v=0$. These are precisely the locally associated Gaussian graphical models discussed in Section~\ref{sec:laggm}.
\end{ex}

Mixed convex exponential families enable easy formulation of other relevant models encoding positive dependence in Gaussian distributions:
\begin{ex}\label{ex:pos-assoc2}
With the set-up as in Example~\ref{ex:pos-assoc},  we alternatively split the sufficient statistics  into $\bs u=(-x_i x_j/2)_{ij\notin E}$ and $\bs v=(-x_ix_j/2)_{ij\in E}$. The diagonal entries are now included in $\bs v$. Then $\bs \mu_u=(-\Sigma_{ij}/2)_{ij\notin E}$ and $\bs\theta_v=(K_{ij})_{ij\in E}$. We may consider a mixed convex family given by  $\bs \mu_u\leq 0$ and $\bs\theta_v\leq 0$. Here cliques in the graph correspond to subsystems characterized by a strong notion of positive dependence (all partial correlations nonnegative) or, in other words, the conditional distribution of variables in a clique given the remaining variables is $\mtp$. Otherwise the system is weakly positively dependent (positive correlations).
\end{ex}

An example of a mixed linear model can be easily motivated by causal analysis \cite{pearl1994can}, where zero restrictions on some entries of $\Sigma$ correspond to marginal independence and zero restrictions on $K$ correspond to conditional independence. Models of this form fit our general set-up:

\newcommand{\gcon}{G_{\textsc{con}}}
\newcommand{\gcov}{G_{\textsc{cov}}}
\begin{ex}\label{ex:wermuth}
Given a graph $\gcov$, called the covariance graph, over $V=\{1,\ldots,d\}$, we define the  corresponding covariance graph model $B(\gcov)$ given by all covariance matrices that satisfy $\Sigma_{ij}=0$ if $ij\notin E(\gcov)$. Given the covariance graph $\gcov$ and the concentration graph $\gcon$ we want to understand the intersection $B(\gcov)\cap M(\gcon)$. In the special case when $ij\notin E(\gcon)$ implies $ij\in E(\gcov)$, the corresponding model is a mixed linear model. Models of this type were studied in detail in \cite{kauermann:96}, see also \cite{boege2021geometry,drton2002new,pearl1994can}.
\end{ex}

\begin{ex}\label{ex:djordjilovic}
Recently \cite{djordjilovic2018searching} discussed the problem of testing equality of mean zero $d$-variate Gaussian distributions under the assumption that they lie in a fixed graphical model over the graph $G$. So suppose that $X,Y$ are independent Gaussian with covariance matrices $\Sigma^{(1)}, \Sigma^{(2)}$. Here the mixed parameters are $(\Sigma^{(1)}_{ij},\Sigma^{(2)}_{ij})_{ij\in E}$ for the mean part and $(K^{(1)}_{ij},K^{(2)}_{ij})_{ij\notin E}$ for the canonical part. The mixed linear model which assures that both distributions are equal is given by $\Sigma^{(1)}_{ij}=\Sigma^{(2)}_{ij}$ for all $ij\in E$ and $K^{(1)}_{ij}=K^{(2)}_{ij}=0$ for all $ij\notin E$. 
\end{ex}

Although the Gaussian case is our focus here, there are interesting examples beyond this case.

\begin{ex}Consider discrete random variables $X$ and $Y$ taking values in $\cS=\{0,1,\ldots, k\}$ and let $p_{xy}=P(X=x, Y=y)$ with $p_{xy}>0$ for all $x,y\in \cS$.  
This specifies a regular exponential family with canonical parameters 
$$\lambda_{xy}, \quad x,y\in \cS \text{ and } (x,y)\neq(0,0)$$ where
$$\lambda_{xy}=\log p_{xy}-\log p_{00}$$
and so $\lambda_{00}=0$. The corresponding sufficient statistics are $\bs t=(t_{xy})$ where
$$t_{xy}(x',y')=\mathbf{1}_{\{(x,y)\}}(x',y')$$ and the corresponding mean parameters are $p_{xy}$ for $(x,y)\neq (0,0)$. We will consider a linear transformation of the canonical parameters and sufficient statistics into
$$\theta_{xy}=\lambda_{xy}-\lambda_{x0}-\lambda_{0y},\; x,y\in \cS\setminus \{0\},
\quad \theta_{x0}=\lambda_{x0},\; \theta_{0y}=\lambda_{0y},\; x,y\in \cS\setminus \{0\}$$
with corresponding sufficient statistics 
$$t_{xy},\; x,y\in \cS\setminus\{0\}; \quad t_{x+}= \sum_{y\in \cS}t_{xy},\; x\in \cS\setminus\{0\}\quad t_{+y}=\sum_{x\in \cS}, \; y\in \cS\setminus\{0\}.$$
This exponential family may thus be mixed parametrized with the \emph{marginals}
\begin{gather*}\mu_{x+}=p_{x+}= \E\{t_{x+}(X,Y)\}=\sum_{y}p_{xy},\;x \in \cS\setminus\{0\},\\
\mu_{+y}=p_{+y}=\E\{t_{+y}(X,Y)\}=\sum_{x}p_{xy},\;y \in \cS\setminus\{0\}\end{gather*}
and the \emph{interactions}
$$\theta_{xy},\;x,y\in \cS\setminus\{0\}.$$
We may then consider the hypothesis of \emph{marginal homogeneity} (\cite{kullback:71}), i.e.\ 
\begin{equation}\label{eq:marghom}p_{x+}=p_{+ x} \text{ for all $x\in \cS$}\end{equation} in combination with the distribution being $\mtp$; the latter is equivalent to the condition 
\begin{equation}\label{eq:mtp2disc}\theta_{xy}+\theta_{x'y'}-\theta_{xy'}-\theta_{x'y}\geq 0 \text{ for all $x\geq x'$ and $y\geq y'$}.\end{equation}
The restriction \eqref{eq:marghom}  is convex (in fact linear) in the mean parameters and \eqref{eq:mtp2disc} is convex in $\theta_{xy}$
so these restrictions jointly specify a mixed convex exponential family.  

Another alternative would exploit that categories are ordered and for example specify that $p_{i+}$ is stochastically smaller than $p_{+i}$ i.e.\ 
$$\sum_{x=0}^{j}p_{x+}\leq\sum_{y=0}^jp_{+y} \text{ for all $j\in \cS$},$$ 
yielding a convex restriction also on the mean parameters; see \cite{agresti:83} and \cite{agresti2003categorical} for further details of this model. 
\end{ex}

The mixed parametrization can be naturally used in discrete exponential families when the mean vector is regressed via a link function on some covariates and the remaining canonical parameters are used to handle higher order interactions. Similar ideas emerge for models with restrictions on marginal and conditional distributions; see e.g.\ \cite{fitzmaurice1993likelihood} and \cite{glonek1996class}.

\section{Estimation in mixed convex exponential families}\label{sec:estimation}

Since mixed convex models are not necessarily convex exponential families (given by convex constraints on $\bs\theta$ only),  maximum likelihood estimation leads in general to non-convex optimization problems that may have many local optima. In this section we propose a simple alternative approach leading to \emph{two} convex optimization problems. 
In Section~\ref{sec:asym} we will show that, asymptotically, the resulting estimator has the same asymptotic distribution as the MLE up to the first order in the sense that the difference between the estimators converges in probability to zero, even after multiplying with $\sqrt{n}$.

\subsection{Likelihood and its dual}

Before we present our optimization procedure, we quickly recall the definition of the dual likelihood function; cf.\ Chapter~6 in \cite{brown:86}. Given a random sample $X^{(1)},\ldots,X^{(n)}$ of size $n$ from the exponential family $\cE$ in \eqref{eq:expfam}, denote $\bs t=\bs t_n=\sum_{i=1}^n \bs t(X^{(i)})/n$. The log-likelihood function is a strictly concave function of $\bs \theta$ given as
\begin{equation}\label{eq:likelihood}
\ell(\bs \theta;\bs t)\;=\;\<\bs \theta,\bs t\>-A(\bs \theta),
\end{equation}
where we here and in the following have suppressed the index $n$ as we are not yet considering the asymptotic behaviour.  Since 
$$\nabla \ell(\bs \theta;\bs t)=\bs t-\nabla A(\bs \theta)=\bs t-\mu(\bs \theta),$$
the unconstrained optimum based on data $\bs t$ is the parameter $\bs \theta$ for which the mean parameter $\mu(\bs \theta)$ is equal to $\bs t$; in other words, this is $\theta(\bs t)$ and is well-defined if $\bs t\in M$. In what follows we ignore that $\bs t$ comes from data and write $\ell(\bs \theta;\bs\mu)$, where $\bs \mu$ is a general point in the topological closure $\overline M$ of the space of mean parameters.

The \emph{Fenchel conjugate} of the cumulant function $A$ is the function 
$$A^*(\bs \mu)\;=\;\sup\{\ell(\overline{\bs \theta};\bs \mu): \overline{\bs\theta}\in \R^k\}.$$  
The function $A^*$ is convex as a supremum of linear functions and indeed strictly convex. 
The unique optimizer of the log-likelihood is $\theta(\bs \mu)$, so
\begin{equation}\label{eq:conjugate}
A^*(\bs \mu)\;=\;\ell(\theta(\bs \mu);\bs \mu)\;=\;\<\theta(\bs \mu),\bs \mu\>-A(\theta(\bs \mu))
\end{equation}
implying in particular that $A^*$ is smooth, i.e.\ infinitely often differentiable, since $\theta$ and $A$ are both smooth.
For any fixed $\overline{\bs \theta}$, the function
\begin{equation}\label{eq:duallike}
\widecheck \ell(\bs \mu;\overline{\bs \theta})\;:=\;\<\overline{\bs \theta},\bs \mu\>-A^*(\bs \mu)
\end{equation}
is a strictly concave function of $\bs \mu\in M$ called the \emph{dual log-likelihood function}. Analogously to the log-likelihood function, $\widecheck\ell$ satisfies 
$$\nabla_{\bs\mu} \widecheck \ell(\bs \mu;\overline{\bs\theta})\;=\;\overline{\bs \theta}-\nabla_{\bs\mu} A^*(\bs \mu)\;=\;\overline{\bs \theta}-\theta(\bs \mu),$$ which follows by composite differentiation in \eqref{eq:conjugate} since $\mu(
\theta(\bs \mu))=\bs \mu$ and thus 
$$\nabla_{\bs\mu} A^*(\bs \mu)=\theta(\bs\mu) +\nabla_{\bs \mu}\theta(\bs \mu)\cdot \bs \mu -\nabla_{\bs\mu}\theta(\bs\mu)\cdot \mu(
\theta(\bs \mu))=\theta(\bs\mu).$$

Consider two distributions in $\cE$, one with the mean parameter $\bs \mu_1$ and the other with canonical parameter $\bs \theta_2$. The \emph{Kullback--Leibler divergence} (\cite{kullback:leibler:51}) between these two distributions is
\begin{equation}\label{eq:KL}
\K(\bs \mu_1,\bs \theta_2)\;=\;-\<\bs \mu_1,\bs \theta_2\>+A^*(\bs\mu_1)+A(\bs \theta_2),
\end{equation}
which follows directly from Proposition 6.3 in \cite{brown:86} and the definition of $A^*(\bs \mu_1)$.
The Kullback--Leibler divergence is well defined and nonnegative over $M\times \Theta$. Moreover, $\bs K(\bs \mu_1,\bs\theta_2)=0$ if and only if $\bs\mu_1=\mu(\bs\theta_2)$. We will extend the definition of $\K(\cdot,\bs \theta
)$ to all $\R^k$ by semicontinuity; cf.\ \cite[p.\ 175]{brown:86}. Then $\K(\bs \mu,\bs \theta)$ may be well-defined even if $\bs \mu$ does not lie in the space of mean parameters but in its closure.

The reason to express the Kullback--Leibler distance in terms of $\bs \mu_1$ and $\bs \theta_2$ rather than $\bs \theta_1,\bs \theta_2$ (as usually done in the literature) is that we wish to exploit the following basic result.
\begin{prop}\label{prop:KLconvex}
The Kullback--Leibler divergence $\K(\bs \mu_1,\bs \theta_2)$ is strictly convex both in $\bs \mu_1$ and in $\bs \theta_2$.
\end{prop}
\begin{proof}This follows dicrectly from \eqref{eq:KL} and the fact that both $A(\bs \theta)$ and $A^*(\bs \mu)$ are strictly convex functions
\end{proof}
\begin{rem}\label{rem:kulllike}
Minimizing $\K(\bs t,\bs \theta)$ with respect to $\theta$ with $\bs t$ fixed is equivalent to maximizing $\ell(\bs\theta;\bs t)$ in \eqref{eq:likelihood}. Similarly, minimizing $\K(\bs\mu,\overline{\bs\theta})$ with respect to $\bs \mu$ with $\overline{\bs\theta}$ fixed is equivalent to maximizing the dual log-likelihood $\widecheck{\ell}(\bs\mu;\overline{\bs\theta})$ in \eqref{eq:duallike}.
\end{rem} 

\subsection{The mixed dual estimator}\label{sec:MDE}

Recall the definition of the sets $C_u, C_v$ in \eqref{eq:cucv} and that $\bs t=\sum_{i=1}^n \bs t(X^{(i)})/n$. We propose the following two-step procedure to estimate the mixed parameter $(\bs\mu_u,\bs\theta_v)$ in the mixed convex family $\cE'$ from data $\bs t$:
\begin{enumerate}
    \item [(S1)] First minimize $\K(\bs t,\bs\theta)$ over $\bs \theta \in C_v\subseteq \Theta$. Denote the \emph{unique} optimum, assuming it exists, by $\widehat{\bs \theta}$. 
    \item [(S2)] Then minimize $\K(\bs \mu,\widehat{\bs\theta})$ subject to $\bs \mu\in C_u\subseteq M$. Denote the unique optimum by $\widecheck{\bs \mu}$. 
\end{enumerate}
\noindent We shall term our estimator $\widecheck{\bs \mu}$ the \emph{mixed dual estimator} (MDE) {and show below in Theorem~\ref{th:identity} that indeed $P_{\widecheck{\bs\mu}}\in \cE'$.}
By Proposition~\ref{prop:KLconvex}, both steps (S1) and (S2) rely on solving a convex optimization problem. Note that the optimum in (S1) is the MLE under the convex exponential family given by  $\bs\theta \in C_v$ (cf.\ Remark~\ref{rem:kulllike}). This MLE may not exist (if $\bs t$ lies on the boundary of the space of mean parameters) but we have the following:
\begin{prop}\label{prop:s1thens2}
If the optimum $\widehat{\bs \theta}$ in (S1) exists then it is unique and the optimum $\widecheck{\bs \mu}$ in (S2) exists and is unique too.
\end{prop}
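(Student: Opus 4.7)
The plan is to reduce everything to Proposition~\ref{prop:KLconnvex} together with a small amount of convex analysis. Both uniqueness claims are essentially automatic: $\bs\theta \mapsto \K(\bs t_n, \bs\theta)$ is strictly convex on $\Theta$ by Proposition~\ref{prop:KLconnvex}, so its restriction to the convex set $C_v$ admits at most one minimizer, giving uniqueness of $\widehat{\bs\theta}$. Likewise $\bs\mu \mapsto \K(\bs\mu, \widehat{\bs\theta})$ is strictly convex on $M$, so uniqueness of $\widecheck{\bs\mu}$ will follow as soon as existence is established.

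For the existence of $\widecheck{\bs\mu}$, I would use formula (\ref{eq:Kdual}) to rewrite the objective as
$$
\K(\bs\mu, \widehat{\bs\theta}) \;=\; A(\widehat{\bs\theta}) - \<\widehat{\bs\theta}, \bs\mu\> + A^*(\bs\mu),
$$
so the problem reduces to minimizing the strictly convex function $\bs\mu \mapsto A^*(\bs\mu) - \<\widehat{\bs\theta}, \bs\mu\>$ over the closed convex set $C_u$. The unconstrained minimizer is $\mu(\widehat{\bs\theta}) \in M$, witnessing that the objective takes finite values. A standard property of Fenchel conjugates in regular exponential families---namely, that $A^*$ is superlinear at infinity because $A$ is finite on the open set $\Theta$ and in particular in a neighbourhood of $\widehat{\bs\theta}$---implies that this objective is coercive on $\R^k$. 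Its sublevel sets are therefore bounded, hence, after extending $\K(\cdot, \widehat{\bs\theta})$ to $\overline M$ by the semicontinuous extension recalled just before Proposition~\ref{prop:KLconnvex}, compact. Lower semicontinuity plus closedness of $C_u$ then gives that the infimum over $C_u$ is attained, and strict convexity of $\K(\cdot, \widehat{\bs\theta})$ yields uniqueness of the minimizer.

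The main obstacle is the existence argument in (S2). Uniqueness is free from Proposition~\ref{prop:KLconnvex}, but existence requires three ingredients that the excerpt only touches on lightly: coercivity of $A^*$ (via $\widehat{\bs\theta}$ lying in the interior of $\Theta$), the semicontinuous extension of $\K$ to the boundary of $M$, and closedness of $C_u$ as a convex constraint set. Assembling these three ingredients is the non-routine part; once in place, the rest of the proposition follows mechanically from strict convexity.
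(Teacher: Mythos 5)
Your uniqueness arguments match the paper, and your reduction of (S2) via (\ref{eq:Kdual}) to minimizing $A^*(\bs\mu)-\<\widehat{\bs\theta},\bs\mu\>$ is the same starting point. The coercivity observation is also correct: since $\widehat{\bs\theta}$ lies in the open set $\Theta$, picking $\varepsilon>0$ with the ball $B(\widehat{\bs\theta},\varepsilon)\subseteq\Theta$ and using $A^*(\bs\mu)\geq\<\bs\theta,\bs\mu\>-A(\bs\theta)$ for $\bs\theta=\widehat{\bs\theta}+\varepsilon\bs\mu/\|\bs\mu\|$ gives linear growth, so sublevel sets of the (lower semicontinuous) objective are compact and the infimum over any nonempty closed subset of $\R^k$ is attained.

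The genuine gap is your treatment of $C_u$ as ``the closed convex set $C_u$.'' By definition $C_u\subseteq M$ and $M$ is open, so $C_u$ is in general \emph{not} closed in $\R^k$ (in Example~1.2, $C_u=\{\Sigma\succ0:\Sigma_{ij}\geq0,\ ij\in E\}$ loses closedness at the boundary of the positive definite cone). Your coercivity argument therefore only yields a minimizer over the closure $\overline{C_u}$, and this minimizer could a priori sit on $\overline{M}\setminus M$, where $A^*$ may still be finite (think of the closed probability simplex in discrete families); in that case the optimum in (S2), which is constrained to $C_u\subseteq M$, would fail to exist, which is exactly what the proposition must rule out. This is the crux that the paper's proof handles: it passes to $\widecheck M=\{A^*<\infty\}$, takes the closure $S$ of $C_u$ there, and invokes Theorem~6.13 of \cite{brown:86}, where the hypothesis $\widehat{\bs\theta}\in\Theta$ guarantees not only attainment over $S$ but that the optimizer lies in $S\cap M=C_u$. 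To repair your argument without that citation you would need the corresponding analytic fact, namely that for a minimally represented regular family $A^*$ is essentially smooth (of Legendre type, being the conjugate of the essentially smooth, strictly convex $A$), so $\|\nabla A^*\|\to\infty$ along sequences approaching $\partial M$; a first-order comparison with a point of $C_u\cap M$ then shows the minimizer over $\overline{C_u}$ cannot lie on the boundary of $M$, hence lies in $C_u$. Flagging ``closedness of $C_u$'' as an ingredient does not discharge it, since that ingredient is false as stated; the boundary-exclusion step is the non-routine part of the existence proof.
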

\begin{proof}
Suppose that the optimum in (S1) exists. Uniqueness follows by strict convexity. Note that (S2) is equivalent to the maximization of the dual likelihood $\widecheck\ell(\bs \mu,\widehat{\bs\theta})$ over all $\bs\mu\in  C_u$. Let $\widecheck M=\{\bs\mu:A^*(\bs \mu)<\infty\}$, which contains $M$ but is typically bigger, $M\subseteq \widecheck M\subseteq\overline M$ where $\overline M$ is the {topological}  closure of $M$. Let $S$ be the closure (in $\widecheck M$) of $C_u$. Since $C_u$ is relatively closed in $M$, the only extra points are those in $\widecheck M\setminus M$. Now, by Theorem~6.13 in \cite{brown:86}, the fact that $\widehat{\bs \theta}\in \Theta$ implies that $\widecheck\ell(\bs \mu,\widehat{\bs\theta})$ attains an optimum over $S$ and the optimum is uniquely attained in $S\cap M=C_u$. In particular, the optimum in (S2) exists. (Note that the extra points we added to the model played only an auxiliary role in this proof.)
\end{proof}

\begin{rem}
Let $\bs t=(\bs u,\bs v)$. Theorem~\ref{th:identity} below implies that $\bs u\in M_u$ is a necessary condition for $\widehat{\bs \theta}$ to exist. 
\end{rem}
Our main result of this section shows that after the steps (S1) and (S2), we indeed obtain a point in the mixed convex family $\cE'$. 

\begin{thm}\label{th:identity}Let $\bs t=(\bs u,\bs v)$ and suppose that $\widehat{\bs \theta}$ in step (S1) exists. Then, $\mu_u(\widehat{\bs \theta})=\bs u\in M_u$ and in step (S2) we get that $\theta_v(\widecheck{\bs \mu})=\widehat{\bs \theta}_v$. In particular, after steps (S1) and (S2), the optimum $\widecheck{\bs \mu}$ represents an element in the mixed convex family $\cE'$. 
\end{thm}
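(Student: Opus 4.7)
The approach is to exploit the product structure of the constraints in the mixed parametrization: $C_v$ restricts only the $\bs\theta_v$-block while $C_u$ restricts only the $\bs\mu_u$-block. In each of the two convex problems the uninvolved block can therefore be perturbed freely, and this forces an unconditional stationarity condition that delivers exactly the claimed identity.

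For step (S1), expanding (\ref{eq:KL}) with $\bs\mu_1=\bs t_n$ shows that $\K(\bs t_n,\bs\theta)=-\ell(\bs\theta;\bs t_n)+A^*(\bs t_n)$ up to terms independent of $\bs\theta$, so (S1) is the MLE in the convex exponential subfamily cut out by $C_v$. Since $\widehat{\bs\theta}\in\Theta$ (which is open) and $C_v$ constrains only $\bs\theta_v$, a sufficiently small perturbation of $\bs\theta_u$ around $\widehat{\bs\theta}$ stays in $C_v$. Stationarity in the $\bs\theta_u$-direction thus reads
\[
0=\nabla_{\bs\theta_u}\ell(\widehat{\bs\theta};\bs t_n)=\bs u-\mu_u(\widehat{\bs\theta}),
\]
so $\mu_u(\widehat{\bs\theta})=\bs u$; in particular $\bs u\in M_u$ since $\mu(\widehat{\bs\theta})\in M$.

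For step (S2), identity (\ref{eq:Kdual}) shows that (S2) is equivalent to maximising the dual log-likelihood $\widecheck\ell(\cdot;\widehat{\bs\theta})$ over $\bs\mu\in C_u$. The preceding proposition places $\widecheck{\bs\mu}$ inside the open set $M$, and because $C_u$ restricts only $\bs\mu_u$ one may perturb the $\bs\mu_v$-block locally without leaving $C_u$. Stationarity in the $\bs\mu_v$-direction then yields
\[
0=\nabla_{\bs\mu_v}\widecheck\ell(\widecheck{\bs\mu};\widehat{\bs\theta})=\widehat{\bs\theta}_v-\theta_v(\widecheck{\bs\mu}),
\]
i.e.\ $\theta_v(\widecheck{\bs\mu})=\widehat{\bs\theta}_v$. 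Membership in $\cE'$ is now immediate from the definition: $\widecheck{\bs\mu}\in C_u$ by construction of (S2), and $\theta_v(\widecheck{\bs\mu})=\widehat{\bs\theta}_v$ inherits the convex restriction defining $C_v$ from $\widehat{\bs\theta}\in C_v$, so $\widecheck{\bs\mu}\in\cE'$.

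The only mildly subtle point is justifying that the unconstrained block is genuinely free to move, so that the first-order conditions are necessary. Openness of $\Theta$ handles this in (S1), and the preceding proposition handles it in (S2) by putting $\widecheck{\bs\mu}$ in the open set $M$. Once these interiority statements are in place, the rest is simply unpacking the block forms of $\nabla A=\mu$ and $\nabla A^*=\theta$.
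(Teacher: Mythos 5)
Your proposal is correct and follows essentially the same route as the paper's proof: both rest on the gradient identities $\nabla_{\bs\theta}\K(\bs t_n,\bs\theta)=\mu(\bs\theta)-\bs t_n$ and $\nabla_{\bs\mu}\K(\bs\mu,\widehat{\bs\theta})=\theta(\bs\mu)-\widehat{\bs\theta}$ together with the observation that the $\bs\theta_u$-block in (S1) and the $\bs\mu_v$-block in (S2) are unconstrained, forcing the corresponding partial derivatives to vanish at the optimum. Your explicit justification of interiority (openness of $\Theta$, and $\widecheck{\bs\mu}\in M$ via the preceding proposition) is a slightly more careful packaging of what the paper handles through variation independence of the mixed parametrization, but it is the same argument in substance.
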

\begin{proof}
If the optimum $\widehat{\bs\theta}\in C_v$ in step (S1) exists then, by convexity of $\K(\bs t,\bs\theta)$ and of $C_v$, it must satisfy $\<\nabla_{\bs \theta}\K(\bs t,\widehat{\bs\theta}),\bs\theta-\widehat{\bs\theta}\>\geq 0$ for all $\bs\theta\in C_u$. Since $\Theta$ is an open set, a small perturbation $\widehat{\bs\theta}+\bs \tau$ also lies in $\Theta$. If, in addition, $\bs\tau_v=\bs 0$ then this perturbation lies in $C_v$. Thus, for any sufficiently small vector $\bs\tau\in \R^k$ such that $\bs\tau_v=\bs 0$, we must have 
$$
\<\nabla_{\bs \theta}\K(\bs t,\widehat{\bs\theta}),\bs\tau\>=\<\mu_u(\widehat{\bs\theta})-\bs u,\bs\tau_u\>\geq 0.
$$
Since $\bs\tau_u$ is small but otherwise arbitrary, this is only possible if $\mu_u(\widehat{\bs\theta})=\bs u$, which proves the first part of the theorem. 

The second part is proved in the same way: In step (S2), the optimum $\widecheck{\bs \mu}$ exists and is unique by Proposition~\ref{prop:s1thens2}. By convexity of $\K(\bs \mu,\widehat{\bs \theta})$, the optimum satisfies $\<\nabla_{\bs\mu}\K(\widecheck{\bs \mu},\widehat{\bs\theta}),\bs\mu-\widecheck{\bs\mu}\>\geq 0$ for all $\bs\mu\in C_u$. Since $\widecheck{\bs\mu}$ is an interior point of $M$, a small perturbation $\widecheck{\bs\mu}+\bs\tau$ also lies in $M$ and, if $\bs\tau_u=\bs 0$, it also lies in $C_u$. For any such perturbation, we necessarily have
$$
\<\nabla_{\bs \mu}\K(\widecheck{\bs \mu},\widehat{\bs\theta}),\bs\tau\>=\<\theta_v(\widecheck{\bs\mu})-\widehat{\bs \theta}_v,\bs\tau_v\>\geq 0.
$$
Since $\bs\tau_v$ is small but otherwise arbitrary, this is only possible if $\theta_v(\widecheck{\bs\mu})=\widehat{\bs \theta}_v$.
\end{proof}
\color{black}

\begin{rem}
In principle, we may interchange the order of optimization with respect to $\bs\theta$ and $\bs \mu$  by starting from $\theta(\bs t)$ and running Step (S2) first. However, if $\bs t$ does not lie in the space of mean parameters (but in its closure) $\theta(\bs t)$ is not well-defined. Here we exploit that $\widehat{\bs\theta}$ might exist even if $\bs t$ is on the boundary of $M$ due to the additional convex restriction $\bs\theta\in C_v$. 
\end{rem}

\begin{rem}Note that when $C_v$ is given by affine constraints, the first step just corresponds to reducing data by sufficiency; then the MDE is simply the \emph{dual likelihood estimate} (DLE), studied by \cite{brown:86,christensen:89,efron:78} and used extensively by \cite{kullback:59}. This provides, for example,  a straight-forward way to test equality of distributions in Example~\ref{ex:djordjilovic} because the likelihood ratio statistics based on the dual likelihood has asymptotically the same distribution as the standard likelihood ratio statistics, cf.\ \cite[Theorem 3.3]{christensen:89}. 
\end{rem}

\section{Estimating Gaussian locally associated distributions}\label{sec:EP}
As mentioned in Section~\ref{sec:laggm}, a  locally associated Gaussian graphical model $\mpl(G)$ is determined as
$$\mpl(G)\;\;=\;\;\ep(G)\cap M(G)$$
and thus it forms an instance of a mixed convex exponential family since $\ep(G)$ is convex in $\Sigma$ and $M(G)$ is convex in $K$ and the restrictions refer to distinct parts of the canonical statistic and parameters; see Example~\ref{ex:pos-assoc} for details.

\subsection{The Gaussian log-likelihood}\label{sec:like}

Given the data $\bs X\in \R^{n\times d}$ with independent rows distributed as $N(0,\Sigma^*)$ our goal is to estimate $\Sigma^*$. Let  $S=\bs X^T\bs X/n$ be the sample covariance matrix. The Gaussian log-likelihood (ignoring a multiplicative constant $n$) is  
\begin{equation}\label{eq:gausllike}
\ell(K)\;\;=\;\; \frac{1}{2}\log\det(K)-\frac{1}{2}\tr(S K).
\end{equation}

Note that here and elsewhere we ignore the multiplicative constant $n$ in the log-likelihood function. 
Recall that for the Gaussian family  $$\bs t(\bs x) = -\bs x\bs x^T/2,\quad \bs\theta= K,\quad  \bs\mu=-\Sigma/2, \quad A(K)=-\frac 1 2 \log \det K.$$
The unique optimizer of $\ell(K)$ is $S^{-1}$ and so 
$$
A^*(-\Sigma/2)\;=\;-\frac{1}{2}\log\det(\Sigma)-\frac{d}{2}
$$
and thus the Kullback--Leibler divergence as above becomes
\[\K(-\Sigma_1/2,K_2) \;=\; \frac12\tr(\Sigma_1K_2-I)-\frac12\log \det (\Sigma_1K_2).\]

\subsection{The mixed dual estimator}\label{sec:mde}
We consider now the  mixed dual estimator MDE as developed above - in this case equivalent to the dual likelihood estimator, since the first step is just estimation in the standard graphical model given by edge restrictions on $K$ (see e.g.\ \cite[Section~5.2.1]{lau96}). %

The second step  corresponds to the following convex optimization problem:
\begin{align}\label{eq:dualMLE}
& \underset{\Sigma \succeq 0}
{\text{minimize}}
& & -\log\det(\Sigma)+\tr(\Sigma\widehat{K}) \\
\nonumber & \text{subject to}
&& \Sigma_{ij}\geq 0 \text{ for all } ij\in E(G),
\end{align}
where, as before, $\widehat{K}$ is the MLE of $K$ in the linear exponential family $M(G)$.  Note that we ignore the Markov constraint on $\Sigma$ that would destroy convexity of this problem. As shown in Theorem~\ref{th:identity} above, the Markov constraint becomes automatically fulfilled in the optimization. 
The Lagrangian for this problem is
\begin{eqnarray*}L(\Sigma,\Lambda)=-\log\det \Sigma + \tr(\Sigma\widehat{K})- \tr(\Sigma\Lambda)=
-\log\det \Sigma + \tr(\Sigma(\widehat{K} - \Lambda))\end{eqnarray*}
where $\Lambda$ is a symmetric matrix with diagonal equal to $0$, $\Lambda_{ij}= 0$ unless $ij\in E(G)$, and  $\Lambda_{ij}\geq 0$.
The Lagrangian is mimimized in $\Sigma$ for fixed $\Lambda$ by the  matrix $\Sigma^*$ given as
\begin{equation}\label{eq:stat_lang}\Sigma^*
= (\widehat{K} -\Lambda)^{-1}\end{equation}
and complementary slackness implies for the optimal $\Lambda=\Lambda^*$ that
\[\Sigma^*_{ij}\Lambda^*_{ij}=0,\quad ij\in E\] or, equivalently,
\[\Sigma^*_{ij}(\widehat{K}_{ij}- K^*_{ij})=0,\quad ij\in E.\] 
Note that it follows directly from \eqref{eq:stat_lang} that $\Sigma^*\in M(G)$  if $\widehat{\Sigma}\in M(G)$, conforming with Theorem~\ref{th:identity} and hence $\Sigma^*=\widecheck{\Sigma}$ is in fact the MDE we are looking for.
In summary, we have the following:
\begin{thm}\label{thm:dualMLE}If the MDE $\widecheck{\Sigma}$ of $\Sigma$ under $\mpl(G)$  exists, it is given as the unique positive definite solution to the following system of equations and inequalities, where $\widecheck K=\widecheck{\Sigma}^{-1}$ and $\widehat{K}
=\widehat{\Sigma}^{-1}$:
\begin{enumerate}
\item [i)] $\widecheck{\Sigma}_{ij} \geq 0, \quad ij\in E(G)$
\item [ii)] $\widehat{\Sigma}_{ij}=S_{ij}, \quad ij\in E(G)$
\item [iii)] $\widehat{\Sigma}_{ii}=S_{ii}, \quad i\in V(G)$
\item [iv)] $\widecheck K_{ij} = \widehat{K}_{ij}=0, \quad ij\notin E(G)$
\item [v)] $\widecheck K_{ij} \leq \widehat{K}_{ij}, \quad ij\in E(G)$
\item [vi)] $\widecheck K_{ii} =\widehat{K}_{ii}, \quad i\in V(G)$
\item [vii)] $\widecheck{\Sigma}_{ij}(\widehat{K}_{ij}- \widecheck K_{ij})=0,\quad  ij\in E(G)$.
\end{enumerate}
\end{thm}
Note that the equations (ii), (iii), and the last part of (iv) are equations determining the MLE $\widehat{K}$ under $M(G)$. The condition (vii) naturally induces sparsity in $\widecheck{\Sigma}$.

Further, denoting by $\widecheck G$ the graph whose edges correspond to non-zero entries of $\widecheck{\Sigma}$, we note that the zero entries in $\Sigma$ obtained in this way are complementary to the zero entries in $K$; cf.\ also Example~\ref{ex:wermuth}. We also have the following.

\begin{cor}\label{cor:linear_version}The mixed dual estimate $\widecheck{\Sigma}$  in the model $B(\widecheck G) \cap M(G)$ is identical to the mixed dual estimate in the model $\mpl(G)=A(G)\cap M(G)$  as determined in Theorem~\ref{thm:dualMLE} above.
\end{cor}

\begin{proof}
By construction, if $ij\notin E(\widecheck G)$ then $ij\in E(G)$. Denoting by $E_0$ the set of pairs that do not lie in $E(\widecheck G)$ we get that $E_0\subseteq E(G)$. The optimality conditions for $\widehat{K}$ over $M(G)$ are standard and given by
\begin{enumerate}
\item [(a)] $\widehat{\Sigma}_{ii}=S_{ii}, \quad i\in V(G)$
\item [(b)] $\widehat{\Sigma}_{ij}=S_{ij}, \quad ij\in E(G)$
\item [(c)] $\widehat{K}_{ij}=0, \quad ij\notin E(G)$.
 \end{enumerate}
By an analogous argument, the dual likelihood estimate in $B(\widecheck G)$ based on $\widehat{K}$ is the unique positive definite matrix $\widetilde \Sigma=\widetilde K^{-1}$ satisfying
\begin{enumerate}
\item  [(d)] $\widetilde K_{ii}=\widehat{K}_{ii},\quad i \in V(G)$
\item  [(e)] $\widetilde K_{ij}=\widehat{K}_{ij}, \quad ij\in E(\widecheck G)$
\item  [(f)] $\widetilde \Sigma_{ij}=0,\quad ij\in E_0$.
 \end{enumerate}
Our aim is to show that $\widetilde \Sigma=\widecheck{\Sigma}$. First note  that, (d) together with  condition (vi) in Theorem~\ref{thm:dualMLE}, implies that $\widetilde K_{ii}=\widecheck K_{ii}$ for all $i\in V$. Similarly, (iv) and (e) imply that $\widetilde K_{ij}=\widehat  K_{ij}=\widecheck K_{ij}$ for $ij\notin E(G)$. This equality extends to all $ij\in E(G)\setminus E_0$ by (vii) and (e). Moreover $\widecheck{\Sigma}_{ij}=\widetilde\Sigma_{ij}=0$ for all $ij\in E_0$. It is a standard result that there is a unique completion of a partially specified positive definite matrix $K$ to a matrix such that $\Sigma=K^{-1}$ has zero entries on the unspecified entries of $K$. It follows that  $\widetilde K=\widecheck K$.
\end{proof}
Note that the statement in Corollary~\ref{cor:linear_version} is not trivial since $B(\widecheck G) \cap M(G)$ is not a subset of $\mpl(G)$.

The conditions for existence of $\widecheck{\Sigma}$ do not seem to simplify beyond the conditions for existence of $\widehat{K}$; see also Theorem~\ref{th:identity}.  Conditions for existence of $\widehat{K}$ may be rather involved, see for example \cite{uhler:12,gross:sullivant:18}.

\section{Asymptotic behaviour of the mixed dual estimator}\label{sec:asym}

We now return to the the general mixed dual estimator. In this section we work entirely in the corresponding mixed parametrization $\bs\psi=(\bs\mu_u,\bs\theta_v)$. The MLE and the MDE for $\cE'$ in this parametrization are denoted simply by $\widetilde{\bs \psi}$ and $\widecheck{\bs \psi}$. The maximizer of the log-likelihood function obtained in step (S1) is denoted  $\widehat{\bs\psi}$. Suppose that the true data generating distribution with the mixed parameter $\bs \psi_0$ lies in $\cE'$, that is, $\bs\psi_0\in M_u'\times \Theta'_v$. We study the asymptotic distribution of the MDE $\widecheck{\bs \psi}_n$, where $n$ is the sample size and show that the MDE  is $\sqrt{n}$-consistent and has exactly the same asymptotic distribution as the maximum likelihood estimator $\widetilde{\bs \psi}_n$. 

\begin{thm}\label{th:equal}
The MDE and MLE are asymptotically equivalent, i.e.\ it holds that $\sqrt{n}(\widecheck{\bs\psi}_n-\widetilde{\bs\psi}_n)\to 0$ in probability, implying that $\sqrt{n}(\widecheck{\bs \psi}_n-\bs \psi_0)$ converges in distribution  and the limiting distribution of $\sqrt{n}(\widecheck{\bs \psi}_n-\bs \psi_0)$ equals the limiting distribution of $\sqrt{n}(\widetilde{\bs\psi}_n-\bs \psi_0)$.
\end{thm}
\begin{proof} The proof is provided in Appendix~\ref{app:proof}.
\end{proof}

Let $\overline{\bs\psi}_n=(\bs u_n, \theta_v(\bs t_n))$ be the {MLE in the unrestricted exponential family $\cE$ expressed in the mixed parametrization.} The limiting distribution in Theorem~\ref{th:equal} is obtained by projecting the Gaussian limiting distribution of $\sqrt{n}(\overline{\bs \psi}_n-\bs \psi_0)$ onto the tangent cone of the mixed exponential family at the true parameter $\bs \psi_0$; cf.\ \cite{geyer1994asymptotics}. If the constraints defining $\Theta_v'$ are affine, it is useful to equivalently describe this distribution as the limiting distribution of $\sqrt{n}(\widehat{\bs \psi}_n-\bs \psi_0)$ (which is also Gaussian), onto the tangent cone $T_{M_u'\times \Theta_v'}(\bs \psi_0)$. In the case of locally associated Gaussian graphical models, this results with the Gaussian distribution (the asymptotic distribution of the MLE in a Gaussian graphical model; cf.\ \cite[Proposition~5.8]{lau96}) projected onto the cone given by the edge covariances being nonnegative. The problem is that even if we can describe this distribution exactly, it depends in a very complicated way on the true covariance matrix. It will be given as a mixture of normal distributions  that are `truncated' to regions projecting onto the various facets with weights that are generally impossible to compute exactly. As an alternative to using asymptotic results, the distribution of the MDE may be simulated using bootstrap methods, as the estimation algorithm is fast and guaranteed to converge, whereas simulating the distribution of the MLE is difficult as the MLE may not even  be well-defined for all bootstrap samples.

\section{Learning the local structure}\label{sec:pglasso}

In this section we consider the situation where the graph $G$ defining the local structure in locally associated Gaussian graphical models is unknown. We aim at obtaining a sparse structure in $K$ through a lasso type penalty.
\subsection{The positive graphical lasso}
To avoid losing any positive dependence information we only penalize \emph{positive} values in the inverse covariance matrix, corresponding to large negative partial correlations. More precisely, we want to solve  the following optimization problem
\begin{equation}\label{eq:pospenal}
{\rm minimize}\;\;\;-2\ell(K)+\rho\|K^+\|_1\qquad\mbox{subject to } K\succ 0,
\end{equation}
where $\ell_n(K)$ is the Gaussian log-likelihood in \eqref{eq:gausllike} and $$\|K^+\|_1=\sum_{i\neq j} \max\{0,K_{ij}\}.$$ We shall refer to this procedure as the \emph{positive graphical lasso}. Note that for $\rho=\infty$, the penalty forces the solution $\widehat{K}^\rho$ to be an M-matrix and hence  the positive graphical lasso can be seen as a direct relaxation of the estimation under the assumption that the distribution  is $\mtp$ (\cite{LUZ,slawski2015estimation}); cf.\ Section~\ref{sec:m-matrix}.

By computing subgradients, we easily check that $\widehat K^\rho$ is the unique optimal point of \eqref{eq:pospenal} if and only if 
\begin{equation}\label{eq:0rho}
\widehat \Sigma^\rho_{ij}-S_{ij}\;\in\;\begin{cases}
\{0\} & \mbox{if } \widehat K_{ij}^\rho<0\\
[0,\rho] & \mbox{if } \widehat K_{ij}^\rho=0\\
\{\rho\} & \mbox{if } \widehat K_{ij}^\rho>0\\
\end{cases}\qquad\mbox{for all }{i\neq j}\in V,
\end{equation}As a corollary we get an alternative characterization of the optimal solution that links it to the MLE in the Gaussian graphical model. 
\begin{cor}Let $\widehat G^\rho=(V,\widehat E^\rho)$ be the graph with edges determined by $\widehat K^{\rho}$ and define the modified sample covariance as
\[
S^\rho_{ij}\;=\;\begin{cases}
S_{ij} & \mbox{if } \widehat K_{ij}^\rho\leq 0\\
S_{ij}+\rho & \mbox{if } \widehat K_{ij}^\rho>0\\
\end{cases}\qquad\mbox{for all }i\neq j\in V.
\]
Then $\widehat K^\rho$ is the MLE under the Gaussian graphical model determined by $\widehat G^\rho$, based on the modified sample covariance $S^\rho$.
\end{cor}
\begin{proof} The MLE is uniquely determined by fitting covariances along edges in $\widehat E^\rho$ and satisfying $\widehat K_{ij}=0$ for non-edges. 
\end{proof}
See also Proposition~\ref{prop:kkt} and Corollary~\ref{cor:optimum} below for generalizations of this result.

The positive graphical lasso estimate, as described and calculated above, will avoid large negative partial correlations and as such it may often directly result in a locally associated covariance matrix, in particular for large penalty parameters, as shown in Theorem~\ref{th:InfgivesMLE}. If this is not the case, 
we may  wish to take the additional restriction  of edge positivity into account using the estimate $\widecheck{\Sigma}^\rho$ given as
\[\widecheck{\Sigma}^\rho = \argmin_{\Sigma\in \mpl(\widehat G^\rho)} \K(-\Sigma/2,\widehat{K}^\rho)=\argmin_{\Sigma\in \ep(\widehat G^\rho)} \K(-\Sigma/2,\widehat{K}^\rho).\]
This is exactly the dual likelihood estimate in \eqref{eq:dualMLE} calculated with $\widehat{K}^\rho$ as starting point, rather than $\widehat{K}$.  
We may then again think of the two-step procedure as first obtaining a compact representation $\widehat{K}^\rho$ of the data matrix $S$, adapting and taking into account the penalty for negative partial correlations, and subsequently approximating this compact representation with a locally associated, and hence locally associated Gaussian distribution $\widecheck{\Sigma}^\rho$. We refer to this procedure as the \emph{dual penalized likelihood estimate} (DPLE).
\subsection{A comment on high-dimensional analysis}\label{sec:highdim}

A careful analysis of the high-dimensional properties of the positive graphical lasso estimator is beyond the scope of this paper. Here we share some cautionary remarks. In \cite{soloff2020covariance} the authors analysed the convergence rates for the operator norm $\|\widehat K-K^*\|$ for the problem of M-matrix estimation. As we argue in Section~\ref{sec:m-matrix} below, M-matrix estimation is a special case of our positive graphical LASSO set-up. In particular, Section~3 in \cite{soloff2020covariance} suggests that we cannot expect good rates for the operator norm $\|\widehat K^\rho-K^*\|$ if $K^*$ is sparse. So in the high-dimensional regime, the spectral properties $\widehat\Sigma^\rho$ should be interpreted with caution. Similarly, $\widehat K$ will not have good support recovery. For example, if $K^*=I_d$, $\widehat K$ will typically not even be sparse. A natural way to obtain an estimator with better statistical properties is by replacing the sample covariance matrix $S$ in \eqref{eq:pospenal} with a better estimator of $\Sigma^*$ (e.g. shrinkage estimator). Our motivating example is high-dimensional with $d=136$ and $n=43$. In this case the positive graphical lasso estimator outperforms the graphical lasso estimator by a large margin also producing a much sparser graph; cf.\ Section~\ref{sec:gene}.

 \section{The GOLAZO algorithm}\label{app:golazo}

In this section we formulate a general optimization problem and algorithm that unifies the positive graphical lasso, estimation under $\mtp$ constraints, under local association, and a number of other forms of likelihood-based estimation for graphical models.  It allows us to flexibly introduce sign constraints, zero restrictions, and to penalize different signs of $K_{ij}$ at different rates.  
 
Let $L,U$ be two symmetric matrices with entries in $\R\cup \{-\infty,+\infty\}$ with the restriction that $L_{ij}\leq 0\leq U_{ij}$ for all $i\neq j$, and $L_{ii}=U_{ii}=0$ for all $i$. Denote 
$$
\|K\|_{LU}\;:=\;\sum_{i\neq j}\max\{L_{ij}K_{ij},U_{ij}K_{ij}\}.
$$
The function $\|K\|_{LU}$ is convex, positively homogeneous, continuous, and non-negative. Although it is sublinear, that is $\|K+K'\|_{LU}\leq \|K\|_{LU}+\|K'\|_{LU}$, it does not define a norm unless $|L_{ij}|=|U_{ij}|$ for all $i\neq j$. 

The penalty function $\|K\|_{LU}$ enables us to obtain sparse estimates of $K$ in a way that takes into account the signs of $K$ or, equivalently, the signs of the partial correlations. We aim at solving the following problem
    \begin{equation}\label{eq:pospenalR}
{\rm minimize}\;\;\;-2\ell(K)+\|K\|_{LU},
\end{equation}
and refer to this as \emph{Graphical Oriented LAZy Optimization} (\emph{GOLAZO}). To get a procedure that is invariant under diagonal rescaling we also typically replace the sample covariance matrix $S$ in $\ell_n(K)$ with the sample correlation matrix $R$. 

\begin{rem}
For non-paranormal distributions we replace the sample correlation matrix $R$ with another estimate of the correlation matrix. Following the SKEPTIC approach of \cite{liu2012high}, we first compute Kendall's tau $\widehat  \tau_{ij}$, which can be estimated without knowledge of the underlying monotone transformations $f_i$. Then we compute $\widehat \rho_{ij}=\sin(\tfrac{\pi}{2}\widehat\tau_{ij})$, which is a natural plug-in estimator of the correlation based on the main result of \cite{lindskog2003kendall}.
\end{rem}

To illustrate usefulness of this general approach we list a number of situation that are included in this set-up. 

\textbf{Graphical lasso and SCAD penalties:} If $|L_{ij}|=|U_{ij}|=\rho>0$ for all $i\neq j$, \eqref{eq:pospenalR} corresponds to the standard graphical lasso. More generally, if $|L_{ij}|=|U_{ij}|=\rho_{ij}$, that is when $\|K\|_{LU}$ is a norm, we obtain a version of the graphical lasso that takes into account different scalings of the variables. This general version is used in the adaptive GLASSO procedure and the local linear approximation algorithm used for general concave penalties rely on solving one or more problems of this form; see \cite{scad} for details.

\textbf{Asymmetric graphical lasso:} If $L_{ij}=-\rho_-$ and $U_{ij}=\rho_+$ for all $i\neq j$ where $0<\rho_-,\rho_+<+\infty$, we obtain a version of the graphical lasso, where positive entries of $K$ are penalized at a different rate than the  negative entries.

\textbf{Positive graphical lasso:} If $L=0$ and $U_{ij}=\rho$ for all $i\neq j$ then \eqref{eq:pospenalR} corresponds to the positive graphical lasso problem in \eqref{eq:pospenal}. This looks like the asymmetric graphical lasso problem with $\rho_-=0$ but as we will see, a zero penalty introduces additional complications concerning existence of the optimum. 

\textbf{$\mtp$ distributions:} If $L=0$ and $U_{ij}=+\infty$ for all $i\neq j$ then \eqref{eq:pospenalR} gives the maximum likelihood estimator for constrained M-matrix estimation. In Remark~\ref{rem:tofinite} we show that, rather than infinite, $U_{ij}$ must be sufficiently large.

\textbf{Gaussian graphical models:} In certain situations we may  in  advance wish to specify that some entries of $K$ should be zero. If $K_{ij}=0$, this can be imposed by setting $L_{ij}=-\infty$, $U_{ij}=+\infty$ (by the standard convention $\pm \infty \cdot 0=0$). Thus the optimization algorithm detailed in Section~\ref{sec:pglasso_alg} also yields an interesting alternative to the IPS algorithm and other edge based algorithms in \cite{speed1986gaussian} which may occasionally be slow.

\textbf{Dual likelihood estimate:}
The optimization problem in \eqref{eq:dualMLE} is equivalent to \eqref{eq:pospenalR} just replacing $K$ with $\Sigma$, $S$ with $\widehat{K}$, setting $L_{ij}=-\infty$ and $U_{ij}=0$. 
\begin{prop}\label{prop:kkt}
    If the optimum in \eqref{eq:pospenalR} exists, it is the unique positive definite matrix $\widehat K$ (with $\widehat \Sigma=\widehat K^{-1}$) satisfying 
    $$
\widehat\Sigma_{ij}-S_{ij}\;\in \;\begin{cases}
\{L_{ij}\} & \mbox{if } \widehat K_{ij}<0,\\
[L_{ij},U_{ij}] & \mbox{if } \widehat K_{ij}=0,\\
\{U_{ij}\} & \mbox{if } \widehat K_{ij}>0.
\end{cases}\qquad\mbox{for all } i,j\in V.
$$
\end{prop}
\begin{proof}
The subgradient of the function $\max\{L_{ij}K_{ij},U_{ij}K_{ij}\}$ at $K_{ij}=0$ is the interval $[L_{ij},U_{ij}]$. This subgradient is $\{L_{ij}\}$, $\{U_{ij}\}$ if $K_{ij}<0$, $K_{ij}>0$ respectively. Now the conclusion follows from the standard theory for non-differentiable convex functions; see, for example, \cite{rockafellar1970convex}. 
\end{proof}

The problem  \eqref{eq:pospenalR} is a convex optimization problem. Its dual problem is particularly  simple and admits a straight-forward block-coordinate descent procedure. To obtain the dual, note that $$\max\{L_{ij}K_{ij},U_{ij}K_{ij}\}=\sup_{L_{ij}\leq \Gamma_{ij}\leq U_{ij}}\Gamma_{ij}K_{ij}$$ and so
$$
\|K\|_{LU}\;=\;\sup_{L\leq \Gamma\leq U} \tr(\Gamma K)
$$
whereby \eqref{eq:pospenalR} becomes  
$$
\inf_{K\succ 0}\sup_{L\leq \Gamma\leq U} \big\{-\log\det K+\tr((S+\Gamma)K)\;\big\}.
$$
Swapping $\inf$ with $\sup$ and using the fact that the unique optimizer with respect to $K$ of $-\log\det K+\tr((S+\Gamma)K)$ (if exists) is $(S+\Gamma)^{-1}$, we obtain the dual problem by letting $\Sigma=S+\Gamma$:
\begin{equation}\label{eq:pospenalRdual}
{\rm maximize  }\;\;  \log\det \Sigma  +d\quad\mbox{subject to }S+L\leq \Sigma\leq S+U.
\end{equation}
In particular, every feasible point of the dual problem \eqref{eq:pospenalRdual} has the same diagonal as $S$\footnote{Our setting can be easily extended to the case when the diagonal entries of $K$ are also penalized. In this case the optimal point of the dual problem has the same diagonal as $S+U$.}. In particular, if the correlation matrix $R$ is used as the data input, the optimum is a correlation matrix too.

\begin{rem}\label{rem:tofinite}
Since $\Sigma$ is positive definite we have in particular that $-\sqrt{S_{ii}S_{jj}}<\Sigma_{ij}<\sqrt{S_{ii}S_{jj}}$ for all $i\neq j$. It follows that every dually feasible $\Sigma$ satisfies
$$
\max\{S_{ij}+L_{ij},-\sqrt{S_{ii}S_{jj}}\}\;\leq\;\Sigma_{ij}\;\leq\;\min\{S_{ij}+U_{ij},\sqrt{S_{ii}S_{jj}}\}.
$$
This allows us to replace $U_{ij}$ with $U_{ij}\wedge (\sqrt{S_{ii}S_{jj}}-S_{ij})$ and $L_{ij}$ with $L_{ij}\vee (-S_{ij}-\sqrt{S_{ii}S_{jj}})$, which is particularly useful if $L_{ij}=-\infty$ or $U_{ij}=+\infty$. 
\end{rem}

\begin{cor}\label{cor:ineqszeros}Let $\widehat K$ be the optimal solution to \eqref{eq:pospenalR}. If $L_{ij}\leq -S_{ij}-\sqrt{S_{ii}S_{jj}}$ then $\widehat{K}_{ij}\geq 0$. If $U_{ij}\geq -S_{ij}+\sqrt{S_{ii}S_{jj}}$ then $\widehat{K}_{ij}\leq 0$. In particular, $\widehat K_{ij}=0$ if both conditions hold.
\end{cor}

Since \eqref{eq:pospenalR} is always feasible, feasibility of the dual problem \eqref{eq:pospenalRdual} assures that the optimum of \eqref{eq:pospenalR} exists and is unique. We show below that it always holds if $L$ and $U$ have no zeros outside of the diagonal. Under minor conditions this also holds for the positive graphical lasso in which case $L$ is a zero matrix. We provide a more detailed treatment of this problem in Section~\ref{sec:start}. But first we introduce our optimization algorithm.

\subsection{Optimization}\label{sec:pglasso_alg}

To solve \eqref{eq:pospenalRdual} we use a straightforward block-coordinate descent approach that is a direct modification of the algorithm used for the dual graphical lasso problem in \cite{banerjee2008model}. An important difference is that, by default,  we do not penalize the diagonal, which leads to additional issues that may  arise. We optimize the determinant of $\Sigma$ updating row by row, but keep the diagonal entries fixed to be equal to the diagonal of $S$.

For the $j$-th row we consider $\log\det\Sigma$ as the function of $\Sigma_{j,\- j}$ keeping  the other entries of $\Sigma$ fixed. By standard matrix algebra
$$\log\det\Sigma\;=\;\log\det\Sigma_{\- j}+\log\det(\Sigma_{jj}-\Sigma_{j,\- j}(\Sigma_{\- j})^{-1}\Sigma_{\- j,j}).$$
Thus maximizing $\log\det\Sigma$ with respect to $y:=\Sigma_{j,\- j}$ is equivalent to minimizing $y^T (\Sigma_{\- j})^{-1} y$, where we need to impose linear conditions that $S_{ij}+L_{ij}\leq y_i\leq S_{ij}+U_{ij}$ for every $i\in V\setminus \{j\}$. This is an instance of a quadratic program. The following result is a straight-forward generalization of \cite[Theorem 4]{banerjee2008model}. It allows to quickly identify disconnected nodes in the underlying graph.
\begin{lem}\label{lem:isolated}
    If $S_{j,\- j}+L_{j,\- j}\leq 0\leq S_{j,\- j}+U_{j,\- j}$ for some $j\in V$  then $\widehat{\Sigma}_{j,\- j}=\widehat K_{j,\- j}=0$.
\end{lem}

The starting point $\Sigma^0$ of the algorithm needs to be chosen carefully so  that $\Sigma^0$ is dually feasible. In this case, each iterate of the algorithm is guaranteed to be dually feasible. In Section~\ref{sec:start} we show how to find such a starting point. Given the starting point, our procedure is straightforward and described in Algorithm~\ref{alg:EPlasso}. Corollary~\ref{cor:ineqszeros} and Lemma~\ref{lem:isolated} give an obvious way to speed up the computations but reducing the number of nodes that have to be visited at each step. To solve the quadratic problem in each iteration we use the  \texttt{quadprog} package in \texttt{R}.

\bigskip

\begin{algorithm}[htb]
  \SetAlgoLined
  \KwData{A positive semidefinite matrix $S$, penalty matrices $L\leq 0\leq U$.}
  \KwResult{A maximizer of \eqref{eq:pospenal}.}
  Initialize: $\Sigma=\Sigma^0$ (a dually feasible point)\;
  \While{there is no convergence}{
  \For{$j=1,\ldots,d$}{
  Update $\Sigma_{j,\setminus  j} \leftarrow \widehat y$, where
  $$\widehat y =\arg\min_y\Big\{y^T(\Sigma_{\setminus  j})^{-1}y: \; S_{j,\setminus j}+L_{j,\setminus j}\leq y\leq S_{j,\setminus j}+U_{j,\setminus j}\Big\}.$$
      }}
  \caption{The Graphical Oriented LAZy Optimization (GOLAZO) Algorithm.}\label{alg:EPlasso}
\end{algorithm}	
\bigskip
\noindent To establish convergence in Algorithm~\ref{alg:EPlasso} we track the duality  gap
$$
\tr(SK)-d+\|K\|_{LU},
$$
which is guaranteed to be non-negative for each step of the algorithm, decrease at each iteration, and to be zero at the optimum. We stop the algorithm once this positive gap becomes sufficiently close to zero. 

In the actual implementation it is important to  compute the dual gap carefully in case $L,U$ contain infinite entries. We simply  make use of Remark~\ref{rem:tofinite} and replace $\pm \infty$ with appropriate finite bounds. The only remaining issue to complete the description of the algorithm is a procedure to obtain a dually feasible starting point. We address this issue in the next section.

\subsection{Dually feasible starting point}\label{sec:start}

Recall that $\Sigma$ is dually feasible if $\Sigma$ is positive definite and $L\leq \Sigma-S\leq U$. If $S$ is positive definite, it is dually feasible, and we take $\Sigma^0=S$. We then focus on the case when $S$ is rank deficient. Since $S$ is a sample covariance matrix it is always positive semidefinite and it has rank at least one. This implies that (with probability one) the diagonal entries  are strictly positive. Clearly, if some $S_{ii}=0$ then no feasible point exists. We then always assume that $S_{ii}>0$ for all $i\in V$.

We first show how to construct a starting point in the  case when for each $i\neq j$ both the negative and the positive values of $K_{ij}$ are penalized. Denote by ${\rm diag}(S)$ the diagonal matrix whose diagonal is equal to the diagonal of $S$.
\begin{lem}\label{lem:startdiagonal}
If $ L_{ij}<0<U_{ij}$ for all $i\neq j$  then there exists $t\in (0,1)$ such that  $\Sigma^0=(1-t)S+t\,{\rm diag}(S)$ is dually feasible. The condition for dual feasibility becomes  that $L_{ij}\leq -t S_{ij}\leq U_{ij}$ for all $i\neq j$. 
\end{lem}
\begin{proof}
Since ${\rm diag}(S)$ is positive definite, $\Sigma^0=(1-t)S+t\,{\rm diag}(S)$ is positive definite for every $t\in (0,1]$. We have $\Sigma^0-S=t\,({\rm diag}(S)-S)$ and so $(\Sigma^0-S)_{ij}=-tS_{ij}$ for all $i\neq j$. Since $L_{ij}<0$ and $U_{ij}>0$, it holds that if $t$ is sufficiently small then $L_{ij}\leq t\,({\rm diag}(S)-S)_{ij}\leq U_{ij}$ for all $i\neq j$.
\end{proof}

The conditions of Lemma~\ref{lem:startdiagonal} are satisfied for the graphical lasso and this result can be turned into an explicit procedure for computing a starting point. Note that using \texttt{glasso} or \texttt{EBICglasso} in \textsc{R} outputs either a warning or an error when $S$ is not positive definite. As a consequence of Lemma~\ref{lem:startdiagonal}, we have: 
\begin{thm}\label{th:lassoexists}
The optimum in the  graphical lasso problem always exists and is unique. More generally, this also holds for the GOLAZO if $L_{ij}<0<U_{ij}$ for all $i\neq j$.
\end{thm}

The conditions of Lemma~\ref{lem:startdiagonal} are not satisfied for the positive graphical lasso and, in general, for any case where $S$ lies on the boundary of the rectangle $\{\Sigma:L\leq \Sigma-S\leq U\}$. If $L$ has potentially zero entries but $U_{ij}>0$ for all $i\neq j$ then we still can prove that a dually feasible point exists under very mild additional assumptions on $S$. 
This relies on an explicit construction of a dually feasible point based on the definition of a single-linkage matrix of $S$ given in \cite{LUZ}. We provide  the  details of this construction in Appendix~\ref{app:Zmatrix}, which also contains the proof of the following result.
\begin{thm}\label{th:plassoexists}
The optimum in the  positive graphical lasso problem exists and is unique as long as $S_{ij}<\sqrt{S_{ii}S_{jj}}$ for all $i\neq j$, which happens with probability  one if the sample size is at  least two. The same holds in general whenever $U_{ij}>0$ for all $i\neq j$.
\end{thm}

\subsection{Choice of the penalty parameter}\label{sec:ebic}

In this section we propose a simple method to choose the penalty parameters in $L$ and $U$. Our method is based on the Extended Bayesian Information Criterion (EBIC) proposed in \cite{chen2008extended} and adapted to  problems of graphical lasso type in \cite{foygel2010extended}.  Given a sample of $n$ independent and identically distributed observations, let $\widehat  K$ denote an estimate and $\widehat E$ the set of edges of  the underlying graph of $\widehat{K}$. The  EBIC criterion takes the form
$$
\bic_\gamma(\widehat  E)\;=\;-n\ell_n(\widehat{K})+|\widehat E|(\log(n)+4\gamma\log(d)).
$$
The criterion is indexed by a parameter  $\gamma\in [0,1]$. If $\gamma=0$ then the classical BIC is recovered, which is known to be asymptotically consistent for model selection in case $d$ is fixed and $n$ goes to infinity. Positive values of $\gamma$ lead to better graph estimates in  the case when $d$ and $n$ are comparable. This observation can be formalized in certain scenarios but otherwise relies on numerical experiments; cf.\ \cite{foygel2010extended}.

The model selection procedure based on EBIC  relies on choosing a sequence of potential penalty  parameters $\rho_1,\ldots,\rho_l$. Then for fixed $L,U$ we then compute $l$ optima for the GOLAZO problem with parameters $\rho_i L, \rho_i U$. For each of these we compute EBIC and choose $\rho_i$ that minimizes this criterion. For positive glasso the canonical choice is $L_{ij}=0$ and $U_{ij}=1$ for all $i\neq j$. Finally note that this procedure is trivially parallelizable, which we exploit in our code.

\subsection{Sign-constrained likelihood optimization}\label{sec:m-matrix}

The positive graphical lasso problem links to the problem of maximum likelihood estimation under M-matrix constraints \cite{slawski2015estimation,LUZ}, that is to the problem
$$
{\rm minimize}\;\;-\ell(K)\qquad\mbox{subject to } K_{ij}\leq 0\mbox{ for all }i\neq j.
$$
More generally, if $L_{ij}\in \{-\infty,0\}$ and $U_{ij}\in \{0,+\infty\}$ then \eqref{eq:pospenalR} is equivalent to optimizing the Gaussian likelihood under sign-constraints. If $E_L$ is the set  of all $i\neq j$ such that $L_{ij}=-\infty$ and $E_U$ is the set  of all $i\neq j$ such that $U_{ij}=+\infty$ then \eqref{eq:pospenalR} amounts to maximizing the Gaussian likelihood over the set of all inverse covariance matrices $K$ such that $K_{ij}\geq 0$ for all $ij\in E_L$ and $K_{ij}\leq 0$ for all $ij\in E_U$. If $ij\in E_L\cap E_U$ then $K_{ij}=0$.

One of the important reasons why we choose not to penalize the diagonal of $K$ is the following result.

\begin{thm}\label{th:InfgivesMLE}
    Let $L,U$ be such that $L_{ij}\in \{-\infty,0\}$ and $U_{ij}\in \{0,+\infty\}$ for all $i\neq j$. The GOLAZO estimator is equal to the  maximum likelihood estimator under the constraints $K_{ij}\geq 0$ for all $ij\in E_L$ and $K_{ij}\leq 0$ for all $ij\in E_U$.
\end{thm}
Recall from Remark~\ref{rem:tofinite}  that infinity can be replaced by a sufficiently large number.

\subsection{A link to graphical models}

Let $\widehat G$ denote the graph determined by the support of $\widehat K$. Proposition~\ref{prop:kkt} implies the following result.
\begin{cor}\label{cor:optimum}
The optimum $\widehat K$ of \eqref{eq:pospenalR} is equal to the maximum likelihood estimator in the Gaussian graphical model $M(\widehat G)$ with the sufficient statistics $(\widehat S_{ij})_{ij \in \widehat G}$, where $\widehat S_{ij}=S_{ij}+L_{ij}$ if $\widehat K_{ij}<0$ and $\widehat S_{ij}=S_{ij}+U_{ij}$ if $\widehat K_{ij}>0$.
\end{cor}
\begin{proof}
We have that $\widehat\Sigma\in M(\widehat G)$ and it coincides with $\widehat S$ on the edges of $\widehat G$. It is then the unique optimum of the MLE problem.
\end{proof}
This result implies that as soon as we have access to the matrix of signs of $\widehat K$ with ${\rm sgn}(\widehat K_{ij})\in \{-1,0,1\}$ we could alternatively find $\widehat K$ by augmenting the sufficient statistics and fitting the corresponding Gaussian graphical model. If $\widehat G$ is decomposable, the optimum is then given in a closed form. 
\color{black}

\section{Applications and simulations}\label{sec:appls}

In this section we illustrate our methods with  two applications and a small simulation study. The computations are based on the GOLAZO algorithm described in Section~\ref{app:golazo} as implemented in the R package \texttt{golazo} available on GitHub. The corresponding R Markdown files can be downloaded from \url{http://econ.upf.edu/~piotr/supps/2020-LZ-golazo.zip}. 

\subsection{Body fat data}

As a simple illustration for our method we analyse the body fat data first studied in \cite{johnson1996fitting} and available in the \texttt{R} package  \texttt{gRim}.
\begin{figure}[htb]
    \centering
    \includegraphics[scale=.68]{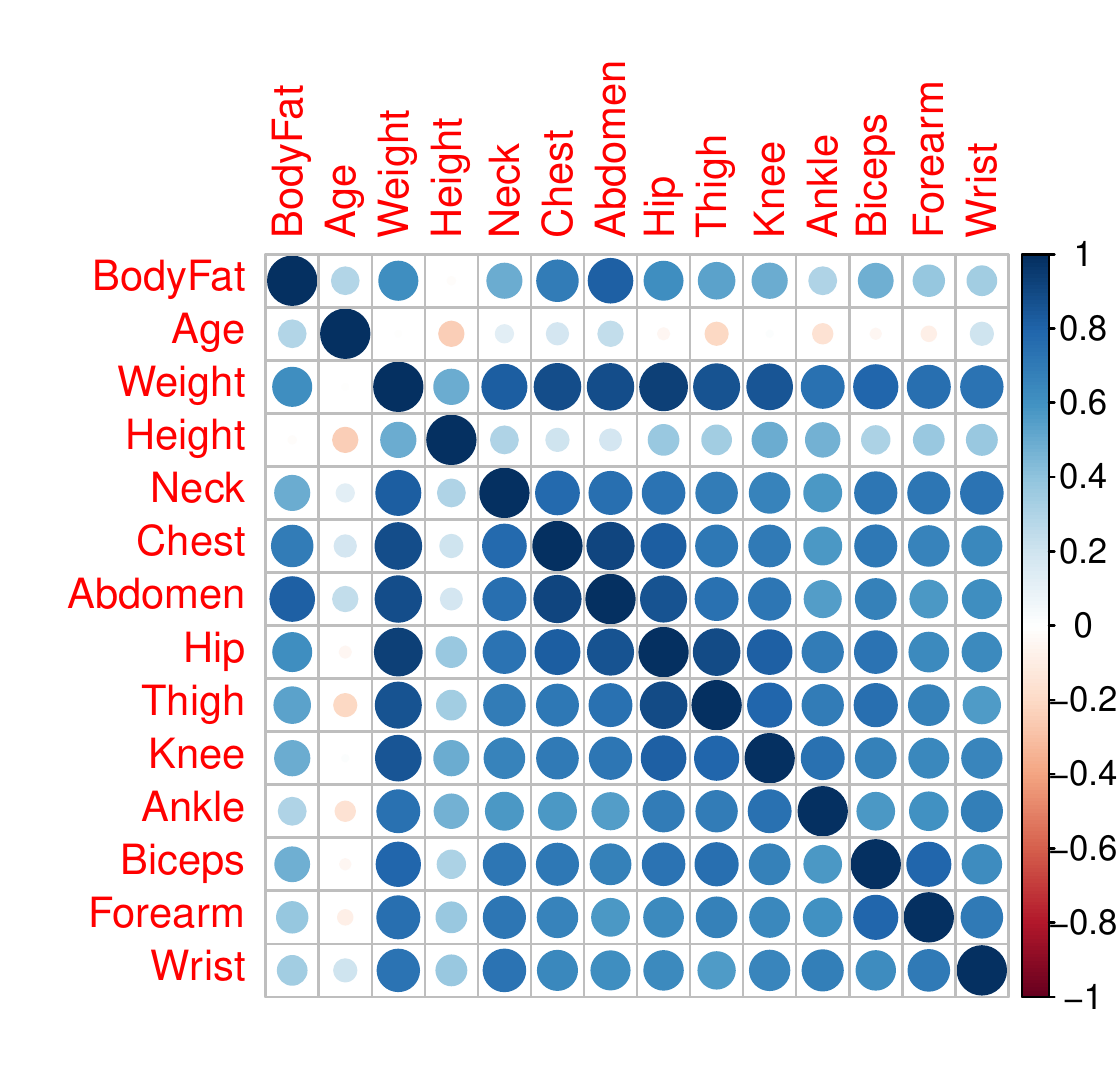} \includegraphics[scale=.68]{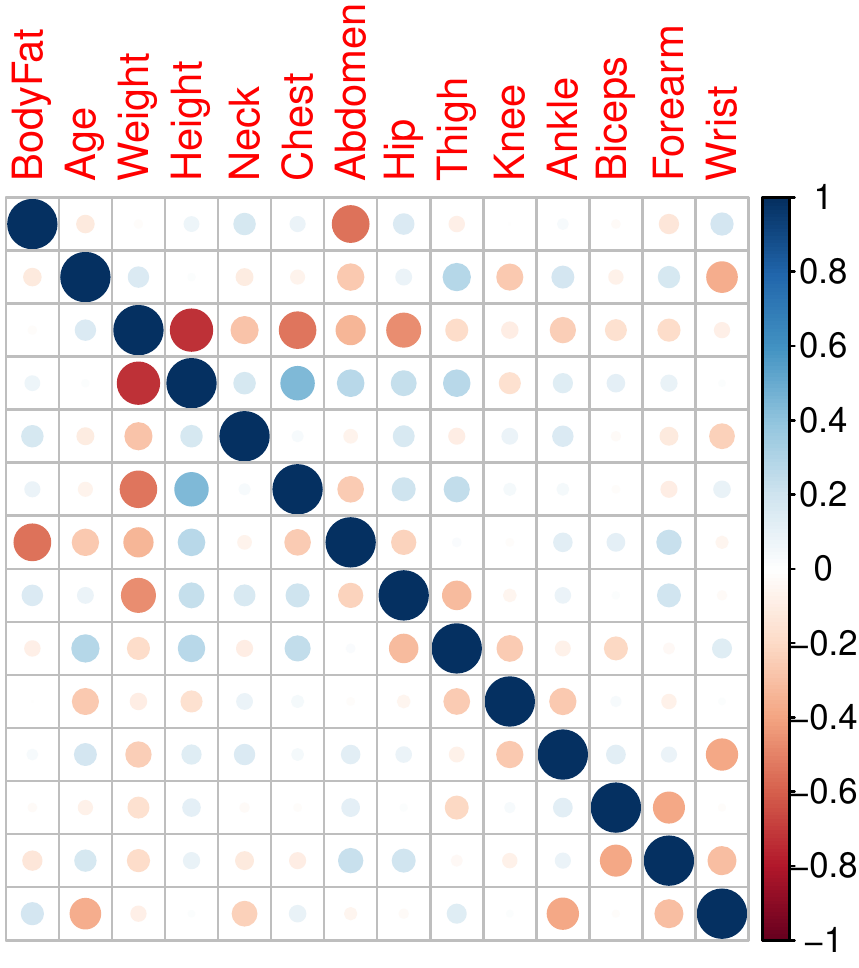}
    \caption{The sample correlation matrix for the body fat data (left) and its normalized inverse (right).}
    \label{fig:RBodyFat}
\end{figure}

In the data, percentage of body fat, age, weight, height, and ten body circumference measurements are recorded for 252 men. We remove 11 individuals from the  study for various problems reported for their respective  observations. As shown in Figure~\ref{fig:RBodyFat} the variables in this dataset are  strongly positively correlated. The only  exception is Age. Its negative correlation with height reflects the fact that the growth in wealth, especially after the second world war, has made the new generations taller. Also, the negative correlation with thigh and ankle could reflect that muscle mass tends to be reduced with age. {In other words, the relation between Age and the other variables is complex and certainly not approximately linear.} For these reasons and for simplicity, we remove Age from our analysis. 

\begin{figure}[htb]
    \centering
    \includegraphics[scale=.6]{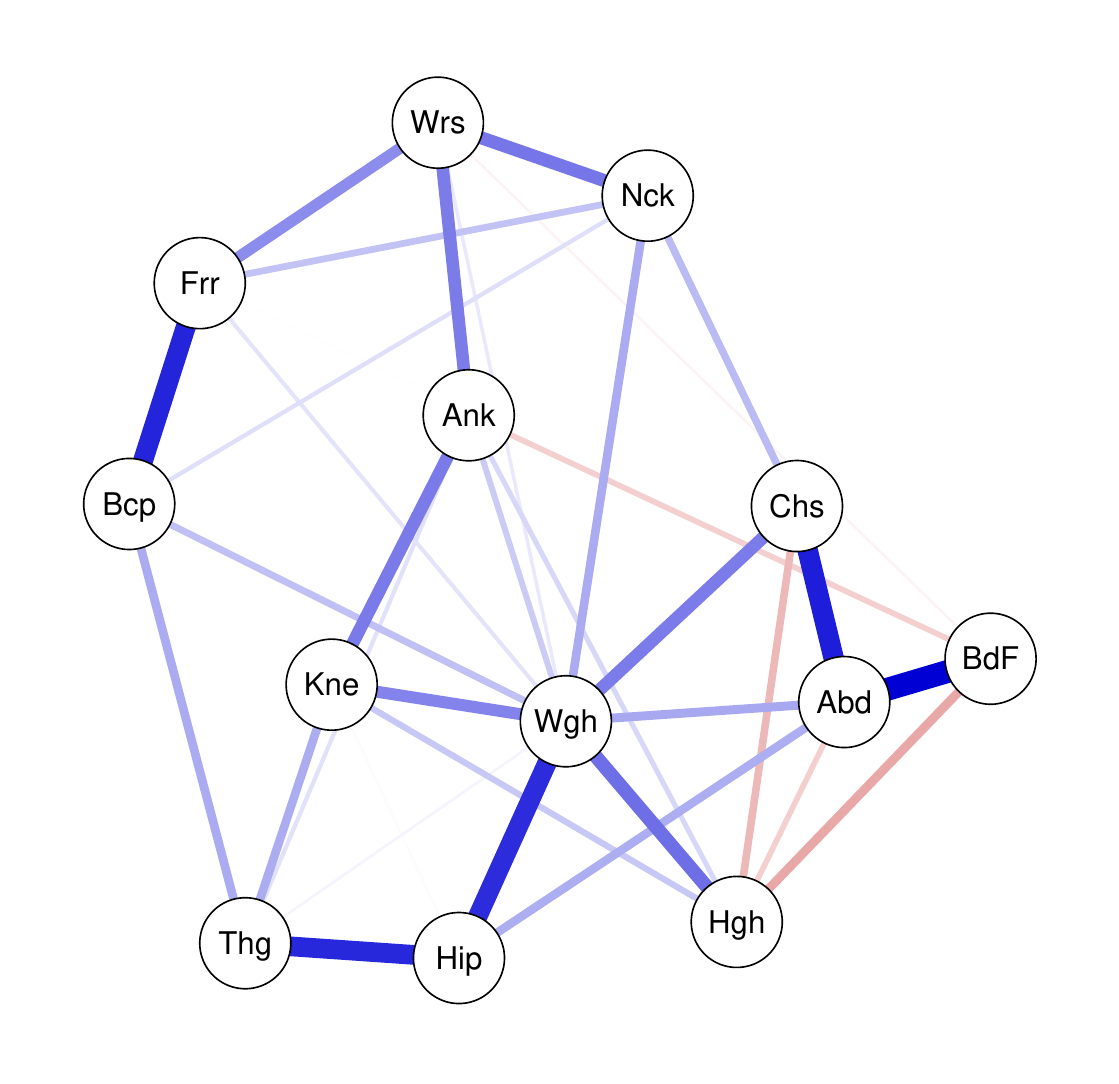}\includegraphics[scale=.6]{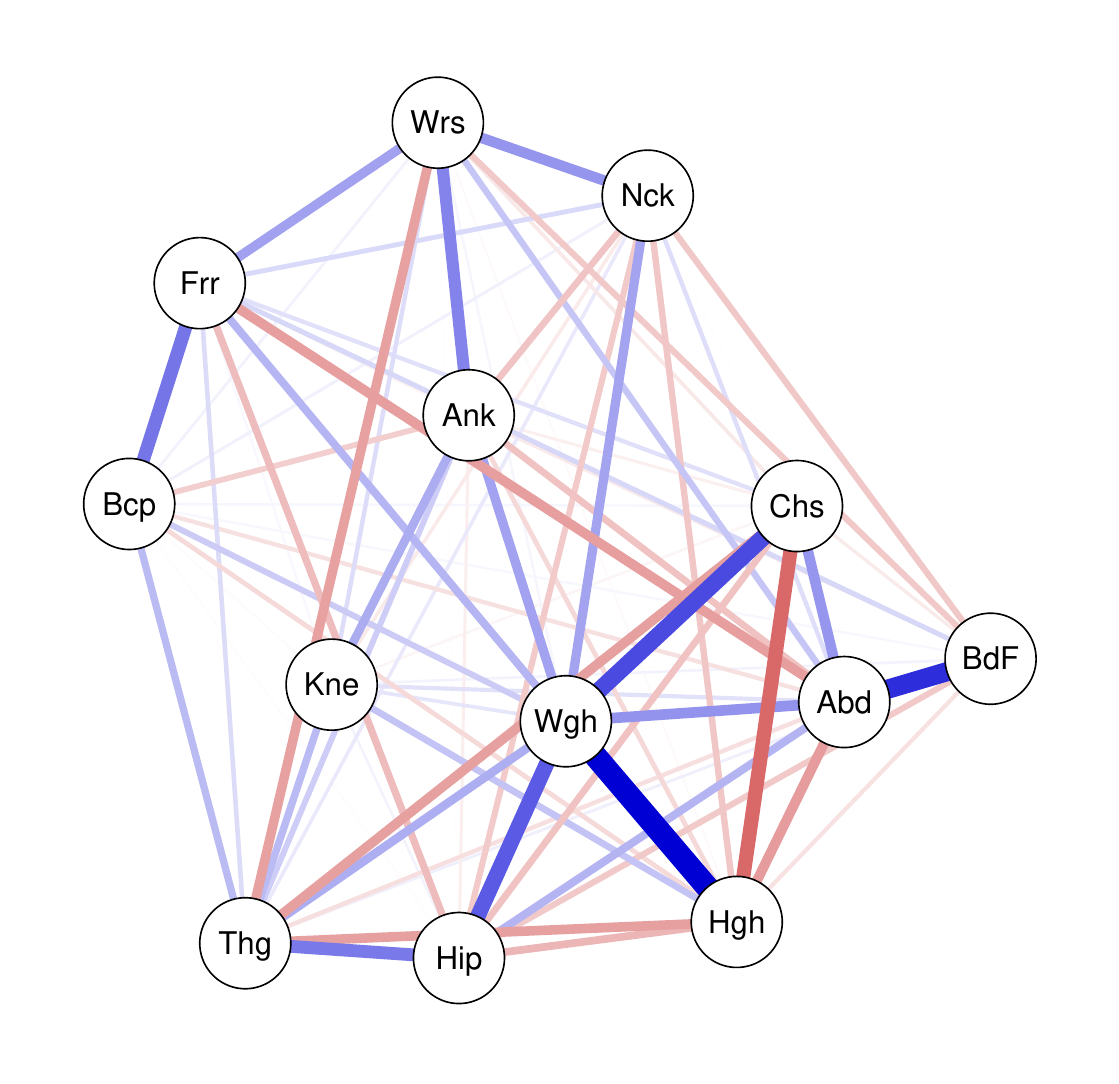}
    \caption{Partial correlations in the estimated positive glasso graph (left) and glasso graph (right).}
    \label{fig:BodyFatK}
\end{figure}
As the inverse of the sample covariance matrix has a significant number of positive entries, $\mtp$ seems to be too strong hypothesis for this dataset. After  normalizing the data we run  the positive graphical lasso procedure with $\rho=0.11$. This choice was based on the  EBIC criterion with parameter $\gamma=0.5$ as described in Section~\ref{sec:ebic}. Figure~\ref{fig:BodyFatK} shows the graph (together with signs of the partial correlations) of the optimum $\widehat{K}$ (left) compared with the graphical lasso estimate. The penalty  in the  graphical lasso estimate  is chosen close to zero both by cross-validation and by the EBIC criterion. Although the  positive glasso estimate gives lower likelihood than  the glasso estimate, it is much sparser and beats glasso in the EBIC criterion with $\gamma=0.5$: $-364.9$ for the positive glasso and $-237.65$ for the graphical lasso. 

In the second step of our procedure we take the  resulting estimate $\widehat  K$ of the positive glasso procedure and compute the dual MLE $\widecheck{\Sigma}$ under edge positivity to further regularize the positive glasso estimate. However, although $\widehat{K}$ is not an M-matrix, it corresponds to a locally associated distribution and so the second step of  the algorithm becomes redundant.  

\subsection{Positive co-expression gene network}\label{sec:gene}

As we argued in the  introduction, our main motivation was to develop statistically sound methods for building gene expression networks that focus on positive co-expression. For illustrative purposes we focus on a relatively small subsample of genes. We analyse a publicly available microarray expression data profiling umbilical cord tissue; cf.\ \cite{cohen2007perturbation,costa2016umbilical}. From  \url{https://functionalgenomics.upf.edu/supplements/FIRinELGANs} we downloaded the normalized and filtered gene expression data, as well as its corresponding phenotype data, including a batch indicator variable that specifies the groups in which samples were processed, birth weight, gestational age, sex and fetal inflammatory response (FIR) status. 

As described in the introduction, we focus  our analysis on 136 genes 
which were coordinately upregulated in FIR-affected infants to trigger an innate immune response, and therefore, we can assume their positive co-expression.

We first run the GOLAZO procedure penalizing negative partial correlations with an EBIC optimal penalty parameter. This computation takes less than a minute on  a standard laptop. The penalty $\rho=0.5$ was chosen as optimal with respect to the EBIC criterion with $\gamma=0.5$. The resulting estimate $\widehat{K}^\rho$ is very sparse with  the edge density $0.067$ (an entry of $\widehat{K}^\rho$ is always treated nonzero if its absolute value  exceeded $10^{-6}$).  Still the diameter of this graph, displayed in Fig.~\ref{fig:genegreen},  is very small, just  5.
\begin{figure}[htb]
\centering\includegraphics[width=0.8\textwidth]{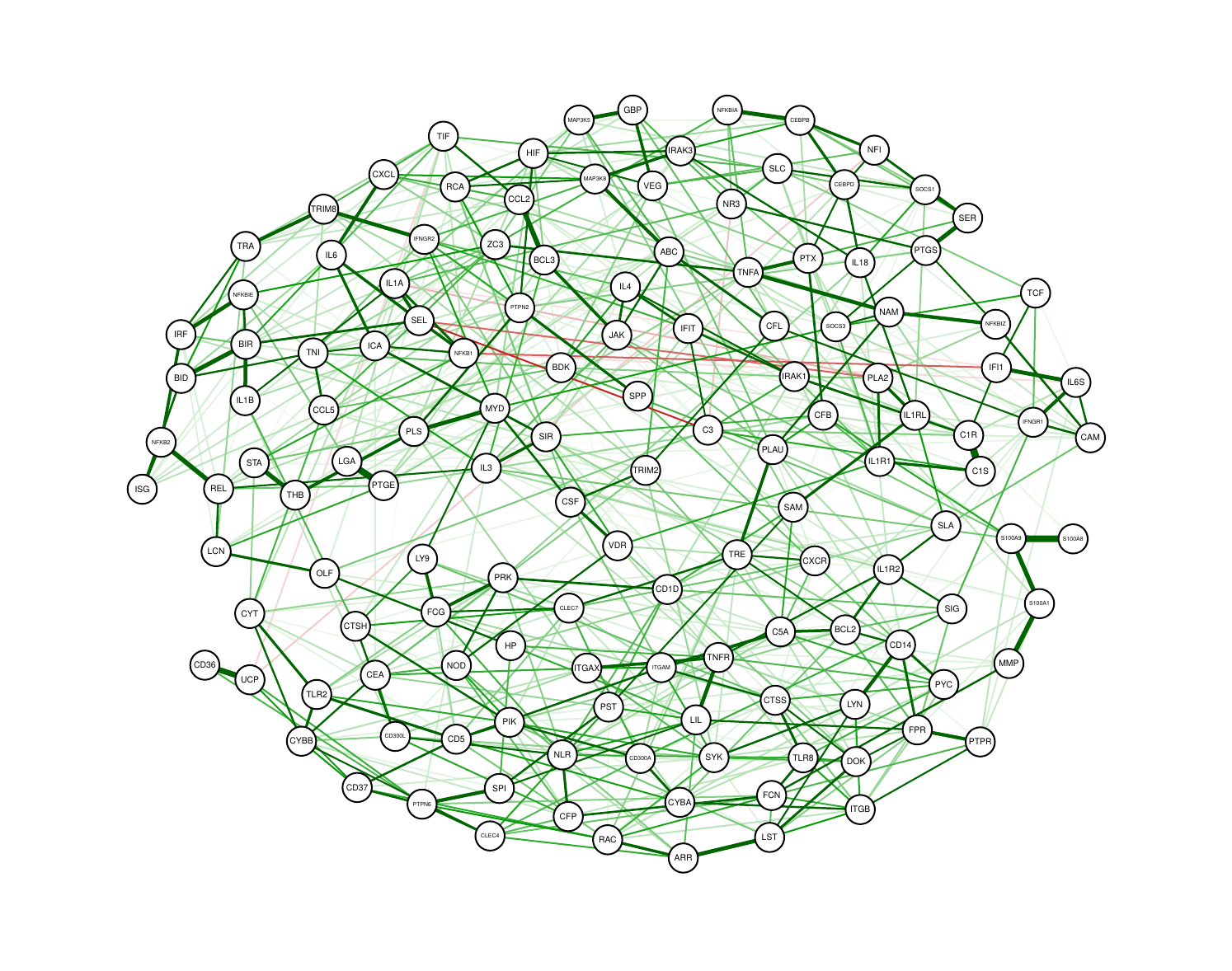}
\caption{\label{fig:genegreen} The concentration graph for the co-expression gene network. Positive partial correlations are indicated with green color and negative partial correlations with red color. The thickness of edges is proportional to their absolute size.}
\end{figure}

Like in the Body Fat example in the previous section, also here the  EBIC criterion chooses much smaller penalty parameter for the standard graphical lasso, $\rho=0.1$. The EBIC criterion for the  optimal positive graphical lasso model penalty is $-8784$ on the other hand, the graphical lasso gives a much denser graph with edge density $0.21$ and the EBIC is also much  higher, $13478$. The optimum $\widehat{K}^\rho$ is not an M-matrix and there are several significant negative partial correlations, displayed in red in Fig.~\ref{fig:genegreen}. However, the distribution is already locally associated so the second step of our procedure is again redundant. This also confirms that local association is a reasonable assumption for this dataset.

\begin{appendix}
\section{Proof of Theorem~\ref{th:equal}}\label{app:proof}
\subsection{Chernoff regularity and convexity}

The asymptotic analysis of statistical procedures under constraints typically involves technical assumptions on the local geometry of the constrained space around the true parameter $\bs \psi_0$. Conditions of this form are called Chernoff regularity conditions; cf.\ \cite{geyer1994asymptotics} and references therein. In our case, convexity ensures these conditions to hold, but we provide the relevant definitions for completeness. 

\begin{defn}
The \emph{tangent cone} $T_C(\bs \psi_0)$ of the set $C\subseteq \R^k$ at the point $\bs \psi_0$ is the set of vectors in $\R^k$ that are limits of sequences $\alpha_n(\bs\psi_n-\bs \psi_0)$, where $\alpha_n$ are positive reals and $\bs\psi_n\in C$ converge to $\bs \mu_0$.
\end{defn}
\begin{defn}
The set $C\subseteq \R^k$ is \emph{Chernoff regular} at $\bs\psi_0$ if for every vector $\bs\tau $ in the tangent cone $T_C(\bs\psi_0)$ there exists $\varepsilon>0$ and a map $\alpha:[0,\varepsilon) \to C$ with $\alpha(0)=\bs\psi_0$ such that $\bs \tau=\lim_{t\to 0^+}[\alpha(t)-\alpha(0)]/t$. In this case we say that $T_C(\bs\psi_0)$ is \emph{derivable}; cf.\ Definition 6.1 in \cite{rockafellar2009variational}.
\end{defn}
The standard asymptotic results typically assume Chernoff regularity. We will use the following result.
\begin{thm}[Theorem~6.9, \cite{rockafellar2009variational}]\label{th:RWchernoff}
A convex set $C\subseteq \R^k$ is Chernoff regular at any $\bs\psi_0\in C$. 
\end{thm}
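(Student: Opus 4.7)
The plan is to reduce to the standard characterization of the tangent cone for convex sets and then construct the required curve by piecewise linear interpolation.

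First, I would establish the following reformulation: for convex $C$, every $\bs\tau\in T_C(\bs\mu_0)$ admits sequences $\bs\mu_n\in C$ and strictly decreasing $t_n\downarrow 0$ with $(\bs\mu_n-\bs\mu_0)/t_n\to\bs\tau$. This is essentially a rewriting of the definition: writing $\bs\tau=\lim\alpha_n(\bs\mu_n-\bs\mu_0)$ with $\alpha_n>0$ and $\bs\mu_n\to\bs\mu_0$, one sets $t_n=1/\alpha_n$; the necessary $t_n\to 0$ holds for $\bs\tau\neq 0$ since $\alpha_n\to\infty$, and strict monotonicity is obtained after passing to a subsequence. The case $\bs\tau=0$ is trivial, taking $\alpha\equiv\bs\mu_0$.

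Next, I would define $\alpha\colon [0,t_1]\to\R^k$ by $\alpha(0)=\bs\mu_0$, $\alpha(t_n)=\bs\mu_n$, and linear interpolation on each interval $[t_{n+1},t_n]$. Because each node $\bs\mu_n$ lies in $C$ and a convex set contains all segments between its points, the image of $\alpha$ lies in $C$. Taking $\varepsilon=t_1$ yields a candidate for the map demanded by Chernoff regularity.

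The core of the argument is showing $\bs\tau=\lim_{t\to 0^+}[\alpha(t)-\bs\mu_0]/t$. For $t\in(t_{n+1},t_n]$, a direct substitution expresses
$$\frac{\alpha(t)-\bs\mu_0}{t}\;=\;\frac{(t_n-t)\,t_{n+1}}{t\,(t_n-t_{n+1})}\cdot\frac{\bs\mu_{n+1}-\bs\mu_0}{t_{n+1}}\;+\;\frac{(t-t_{n+1})\,t_n}{t\,(t_n-t_{n+1})}\cdot\frac{\bs\mu_n-\bs\mu_0}{t_n}.$$
A short calculation shows that the two scalar coefficients are nonnegative on $[t_{n+1},t_n]$ and sum to exactly $1$; thus the right-hand side is a \emph{convex combination} of two vectors, both of which converge to $\bs\tau$ as $n\to\infty$. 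Hence the combination also converges to $\bs\tau$, uniformly over $t\in(t_{n+1},t_n]$, and letting $t\to 0^+$ gives the limit $\bs\tau$ as required.

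The only genuine technical point is the verification that the interpolation weights sum to $1$ so that we obtain a true convex combination (and not merely an affine one) — this is exactly what guarantees convergence without extra blow-up in $n$. Everything else is immediate from the definition of convexity and of the tangent cone, which is why no regularity beyond convexity is needed.
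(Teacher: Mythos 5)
Your argument is correct, and it is complete: the reduction of the tangent-cone definition to sequences $\bs\mu_n\in C$, $t_n\downarrow 0$ with $(\bs\mu_n-\bs\mu_0)/t_n\to\bs\tau$ is handled properly (including the degenerate case $\bs\tau=\bs 0$ and the fact that $\alpha_n\to\infty$ forces $t_n\to 0$ when $\bs\tau\neq\bs 0$), the piecewise-linear path stays in $C$ precisely because $C$ is convex, and your coefficient identity does show that $[\alpha(t)-\bs\mu_0]/t$ is a genuine convex combination of $(\bs\mu_n-\bs\mu_0)/t_n$ and $(\bs\mu_{n+1}-\bs\mu_0)/t_{n+1}$, so the uniform convergence on each interval $(t_{n+1},t_n]$ follows. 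Note, however, that the paper gives no proof of this statement at all: it is quoted verbatim as Theorem~6.9 of Rockafellar and Wets, whose own proof runs differently, via the observation that for convex $C$ the difference-quotient sets $(C-\bs\mu_0)/t$ are nested (nondecreasing as $t\downarrow 0$), so their Painlev\'e--Kuratowski limit exists and equals the tangent cone, which is exactly geometric derivability. Your construction is a more hands-on, pointwise version of the same convexity mechanism; if you want to streamline it, you can replace the linear interpolation between nodes by the radial path $\alpha(t)=\bs\mu_0+(t/t_n)(\bs\mu_n-\bs\mu_0)$ for $t\in(t_{n+1},t_n]$, which lies in $C$ as a convex combination of $\bs\mu_0$ and $\bs\mu_n$ and makes $[\alpha(t)-\bs\mu_0]/t$ constant on each interval, eliminating the weight computation entirely (the definition used in the paper does not require $\alpha$ to be continuous away from $0$).
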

It is clear from the definition that Chernoff regularity is preserved under a smooth and regular change of variables $G:\R^k\to \R^k$. Indeed,  the tangent cone $T_{G({C})}(G(\bs \psi_0))$  is equal to $\nabla G(\bs\psi_0)\cdot T_C(\bs \psi_0)$; cf.\ Section~6.C in \cite{rockafellar2009variational}. If $\bs\tau\in T_{G({C})}(G(\bs \psi_0))$ and $T_C(G(\bs \psi_0))$ is derivable then $$(\nabla G(\bs\psi_0))^{-1}\cdot \bs\tau\;=\;\lim_{t\to 0^+}[\alpha(t)-\alpha(0)]/t$$
for some $\alpha:[0,\epsilon)\to C$. Then $\alpha_G=(\nabla G(\bs\psi_0))\cdot \alpha:[0,\epsilon)\to G(C)$ can be used to show that $\bs\tau$ is derivable.

\subsection{Asymptotics of the maximum likelihood and mixed dual estimator}
Recall our notation $\bs t_n=\sum_{i=1}^n \bs t(X^{(i)})/n$. It is a standard result that $\sqrt{n}(\bs t_n-\bs\mu_0)$ is asymptotically normally distributed; see Proposition~4.3 in \cite{sundberg}. In this section we show that the maximum likelihood  estimator under a convex restriction has a similar rate of convergence, with the limiting distribution not necessarily being  normal. Using equivariance of the MLE and the delta method, we can show this {also holds for the MLE and MDE in a mixed convex exponential family. } 

Let $\bs \psi=\psi(\bs\theta)$ be an alternative smooth, regular, and bijective parametrization of the exponential family $\cE$, so that, $\bs\theta=\psi^{-1}(\bs\psi)$. The log-likelihood function expressed in this new parametrization is denoted by $\ell(\bs \psi;\bs t_n)=\<\psi^{-1}(\bs \psi),\bs t_n\>-A(\psi^{-1}(\bs \psi))$. 
\begin{prop}\label{prop:mleconvex}
   Consider an alternative smooth and regular parametrization $\bs \psi=\psi(\bs\theta)$ of the exponential family $\cE$. Let $C$ be a closed and convex subset of the parameter space $\psi(\Theta)$. Let $\widehat{\bs\psi}_n=\arg\max_{\bs\psi\in C}\ell(\bs\psi;\bs t_n)$ be the maximum likelihood estimator over $C$. If the data are generated from the distribution with parameter $\bs\psi_0\in C$ then $\sqrt{n}(\widehat{\bs\psi}_n-\bs\psi_0)$ converges in distribution.
\end{prop}
\begin{proof}
Since the maximum likelihood estimator is equivariant, we have that $\widehat{\bs \psi}_n=\psi(\widehat{\bs\theta}_n)$, where $\widehat{\bs \theta}_n=\arg\max_{\bs\theta\in \psi^{-1}(C)}\ell(\bs\theta;\bs t_n)$. By the delta method (see Theorem~3.1 in \cite{vandervaart}), it is enough to show that $\sqrt{n}(\widehat{\bs \theta}_n-\bs\theta_0)$ converges in distribution. Since the MLE is an M-estimator, this follows from Theorem~4.4 in \cite{geyer1994asymptotics}. This theorem uses a number of assumptions that we verify one by one: Assumption A   holds because the function $F(\bs \theta)=A(\bs \theta)-\<\bs\theta,\bs\mu_0\>$ admits a quadratic approximation around $\bs\theta_0$ with a positive definite Hessian $\nabla^2 A(\bs \theta_0)$ and $\nabla F(\bs\theta_0)=0$.  Assumption~B is satisfied simply because the second derivative of the likelihood does not depend on the data at all and so condition (4.3) in \cite{geyer1994asymptotics} trivially holds. Assumption~C requires that the standard central limit theorem for the gradient of the likelihood function holds, which again is automatic for exponential families. Assumption~D holds simply because $\widehat{\bs \theta}_n$ is the exact minimizer of $F_n(\bs\theta)=-\ell(\bs\theta;\bs t_n)$ over $\psi^{-1}(C)$. Finally, Chernoff regularity of $\psi^{-1}(C)$ at $\bs\theta_0$ follows by convexity of $C$ and the fact that the property is invariant under smooth transformations; cf.\ Theorem~\ref{th:RWchernoff} and the discussion below it.
\end{proof}

As in Section~\ref{sec:asym}, we now reserve the notation $\bs\psi$ for the underlying mixed parametrization, $\bs\psi=(\bs\mu_u,\bs\theta_v)$. The maximum likelihood estimator in our problem is obtained by maximizing $\ell(\bs \psi;\bs t_n)$ over all parameters $\bs\psi$ in the mixed convex exponential family $\cE'$, $\bs\psi\in M_u'\times \Theta_v'$. This MLE is denoted $\widetilde{\bs \psi}_n$ ($\widetilde{\bs \theta}_n$, $\widetilde{\bs \mu}_n$ resp.) to distinguish from the estimator $\widehat{\bs \theta}_n$ obtained in step (S1) of the procedure for finding the MDE. Recall that  $\overline{\bs\psi}_{n}=(\bs u_n,\theta_v(\bs t_n))$ and denote $\widehat{\bs\psi}_n=(\bs u_n,\widehat{\bs \theta}_v)$, i.e.\ $\widehat{\bs\psi}_n$ is  $\widehat{\bs\theta}_n=(\widehat{\bs\theta}_u,\widehat{\bs\theta}_v)$ expressed in the mixed parametrization, where the fact that the first component of $\widehat{\bs\psi}_n$ is equal to $\bs u_n$ follows from Theorem~\ref{th:identity}.

\begin{cor}\label{cor:indistr}
The sequences $\sqrt{n}(\overline{\bs\psi}_{n}-\bs\psi_0)$, $\sqrt{n}(\widehat{\bs \psi}_n-\bs\psi_0)$, and $\sqrt{n}(\widetilde{\bs\psi}_n-\bs\psi_0)$ all converge in distribution. 
\end{cor}
\begin{proof}The estimators $\overline{\bs\psi}_n$, $\widehat{\bs\psi}_n$, and $\widetilde{\bs\psi}_n$ are all maximum likelihood estimators in families that satisfy the conditions in Proposition~\ref{prop:mleconvex}.
\end{proof}

The proof of our main result relies on the fact that the log-likelihood and dual log-likelihood have locally a similar shape around their global maximum. Moreover, the Hessian at this point is block diagonal. 

\begin{lem}\label{lem:infor}
Let $\bs t=(\bs u,\bs v)\in M$, $\overline{\bs\psi}=(\bs u,\theta_v(\bs t))$, and $\overline{\bs \theta}=\theta(\bs t)$. Then $\nabla_{\bs\psi} \ell(\overline{\bs\psi};\bs t)=\nabla_{\bs\psi} \widecheck{\ell}(\overline{\bs\psi};\bs t)=\bs 0$ and
$$
\nabla^2_{\bs\psi}\ell(\overline{\bs\psi};\bs t)\;=\;\nabla^2_{\bs\psi}\widecheck{\ell}(\overline{\bs\psi};\bs t)\;=\;-\begin{bmatrix}
{\rm var}(\bs u)^{-1} & \bs 0\\
\bs 0 & ({\rm var}(\bs t)^{vv})^{-1}
\end{bmatrix},
$$
where ${\rm var}(\bs t)^{vv}$ stands for the $\bs v\bs v$-block of ${\rm var}(\bs t)^{-1}$ and the variance is computed with respect to the distribution $P_{\overline{\bs \theta}}$.
\end{lem}
\begin{proof}
Slightly abusing notation, we write $\mu(\bs\psi)$ ($\theta(\bs\psi)$) for the map that maps the mixed parameter $\bs \psi$ to the corresponding mean (canonical) parameter. We have
$$
\nabla_{\bs \psi}\ell(\psi,\bs t)=\left(\frac{\partial \bs\theta}{\partial\bs\psi}\right)^T\!\!\cdot (\bs t-\mu(\bs\psi))
$$
$$
\nabla_{\bs \psi}\widecheck{\ell}(\psi,\bs t)=\left(\frac{\partial \bs\mu}{\partial\bs\psi}\right)^T\!\!\cdot (\overline{\bs \theta}-\theta(\bs\psi))
$$
from which the statement about the gradients easily follows by plugging $\bs\psi=\overline{\bs \psi}$ as $\mu(\overline{\bs \psi})=\bs t$ and $\theta(\overline{\bs\psi})=\overline{\bs \theta}$. The particular block-diagonal form of the Hessian   $\nabla^2_{\bs\psi}\ell(\overline{\bs\psi};\bs t)$ follows by Proposition~3.20 in \cite{sundberg}. It remains to show that $\nabla^2_{\bs\psi}\ell(\overline{\bs\psi};\bs t)\;=\;\nabla^2_{\bs\psi}\widecheck{\ell}(\overline{\bs\psi};\bs t)$. Using the Leibniz rule, we observe that differentiating $\nabla_{\bs \psi}\ell(\psi,\bs t)$ with respect to $\bs\psi$, we get one term that vanishes at $\bs\psi=\overline{\bs \psi}$ and so
$$
\nabla^2_{\bs\psi}\ell(\overline{\bs\psi};\bs t)\;=\;-\left(\frac{\partial \bs\theta}{\partial\bs\psi}\right)^T\frac{\partial \bs\mu}{\partial\bs\psi}.
$$
Using the same argument for $\nabla_{\bs \psi}\widecheck{\ell}(\psi,\bs t)$ we get 
$$
\nabla^2_{\bs\psi}\widecheck{\ell}(\overline{\bs\psi};\bs t)\;=\;-\left(\frac{\partial \bs\mu}{\partial\bs\psi}\right)^T\frac{\partial \bs\theta}{\partial\bs\psi}.
$$
Since both matrices are symmetric, the equality follows. 
\end{proof}

Using Lemma~\ref{lem:infor} with $\bs t=\bs t_n$, we see that the observed information  satisfies 
$$J_n:=-\nabla^2_{\bs\psi}\ell(\overline{\bs \psi}_n;\bs t_n)=-\nabla^2_{\bs\psi}\widecheck\ell(\overline{\bs \psi}_n;\bs t_n).$$
In particular, $J_n$ is always positive definite. Taking $\bs t=\widehat{\bs \mu}_n$ we get 
$$
\widehat J_n:=-\nabla^2_{\bs\psi}\ell(\widehat{\bs \psi}_n;\widehat{\bs \mu}_n)=-\nabla^2_{\bs\psi}\widecheck\ell(\widehat{\bs \psi}_n;\widehat{\bs \mu}_n).
$$
Both of $J_n$ and $\widehat J_n$ converge in probability to the Fisher information matrix since the mapping $\bs t\mapsto -\nabla^2_{\bs \psi}\ell(\overline{\bs\psi};\bs t)$ is continuous and both of $\bs t_n$ and $\widehat{\bs\mu}_n$ converge in probability to $\bs\mu_0$. 
We thus conclude
\begin{equation}\label{eq:Jhat}
    \widehat J_n\;=\;J_n+o_P(1).
\end{equation}
Denote by $\|\cdot\|_{J_n}$ ($\|\cdot\|_{\widehat J_n}$) the norm induced by the matrix $J_n$  ($\widehat J_n$ respectively), that is,
$$\|\bs x\|_{J_n}:=\sqrt{\bs x^TJ_n\bs x},\qquad \|\bs x\|_{\widehat J_n}:=\sqrt{\bs x^T\widehat J_n\bs x}.$$
Note that for every $\bs x\in \R^d$
$$
\lambda_{\min} (J_n) \|\bs x\|^2\;\leq\;\|\bs x\|^2_{J_n}\;\leq\; \lambda_{max} (J_n) \|\bs x\|^2,
$$
where $\lambda_{\min} (J_n)$ and $\lambda_{max} (J_n)$ are the minimal and the maximal eigenvalue of $J_n$. Since the eigenvalues are continuous functions of the matrix (use \cite[Theorem 2.6.4]{hornjoh}) and the fact that for a positive definite matrix singular values are equal to the eigenvalues), the sequences $\lambda_{\min}(J_n)$ and $\lambda_{\max}(J_n)$ converge in probability to the corresponding eigenvalues of the Fisher information matrix, which we denote by $\lambda^*_{\min},\lambda^*_{\max}$. The same argument applies to $\lambda_{\min}(\widehat J_n)$ and $\lambda_{\max}(\widehat J_n)$.  We conclude that for every $\bs x\in \R^d$
\begin{eqnarray}
\label{eq:j1} (\lambda^*_{\min}+o_P(1))\|\bs x\|^2&\leq&\|\bs x\|^2_{J_n}\;\leq\;(\lambda^*_{\max}+o_P(1))\|\bs x\|^2\\
\label{eq:j2}  (\lambda^*_{\min}+o_P(1))\|\bs x\|^2&\leq&\|\bs x\|^2_{\widehat J_n}\;\leq\;(\lambda^*_{\max}+o_P(1))\|\bs x\|^2.
\end{eqnarray}

Now let
$$F_n(\bs \psi)=-\ell(\bs \psi;\bs t_n)\qquad\mbox{and}\qquad \widecheck F_n(\bs \psi)=-\widecheck\ell(\bs \psi;\widehat{\bs \mu}_n).$$ The MLE $\widetilde{\bs \psi}_n$ is the minimizer of $F_n$ over $M_u'\times \Theta_v'$ and $\widecheck{\bs \psi}_n$ is the minimizer of $\widecheck F_n$ over $M_u'\times \Theta_v$ (equivalently over $M_u'\times \Theta_v'$, as argued in Theorem~\ref{th:identity}, because $\widehat{\bs\psi}_n\in M_u\times \Theta_v'$). 

Recall that $U_n=O_P(1)$ denotes that $U_n$ is bounded in probability, that is, for every $\epsilon>0$ there exists $M\in \R$ such that $\P(\|U_n\|>M)<\epsilon$. 
\begin{lem}\label{lem:ratecheck}It holds that $\sqrt{n}(\overline{\bs\psi}_n-\bs\psi_0)=O_P(1)$,  $\sqrt{n}(\widetilde{\bs\psi}_n-\bs\psi_0)=O_P(1)$, and 
 $\sqrt{n}(\widecheck{\bs\psi}_n-\bs\psi_0)=O_P(1)$.
\end{lem}
\begin{proof}
By Corollary~\ref{cor:indistr}, $\sqrt{n}(\overline{\bs\psi}_n-\bs\psi_0)$ and $\sqrt{n}(\widetilde{\bs \psi}_n-\bs\psi_0)$ converge in distribution and thus they are bounded in probability. To show the same for the other estimator, we use the fact that locally it is obtained by suitably projecting $\widehat{\bs \psi}_n$ on $M_u'\times \Theta_v$  and $\sqrt{n}(\widehat{\bs \psi}_n-\bs\psi_0)$ is bounded in probability because it converges in distribution by Corollary~\ref{cor:indistr}. More formally, since $\widehat{\bs \psi}_n$ is the global minimizer of $\widecheck F_n$,  $\widecheck{\bs \psi}_n$ is the minimizer over $M_u'\times \Theta_v$, and $\bs\psi_0\in M_u'\times \Theta_v$, we get 
$$
0\leq \widecheck F_n(\widecheck{\bs \psi}_n)-\widecheck F_n(\widehat{\bs \psi}_n)\leq \widecheck F_n(\bs \psi_0)-\widecheck F_n(\widehat{\bs \psi}_n).
$$
Using the second-order expansion of $\widecheck F_n(\bs \psi)$ around $\bs \psi=\widehat{\bs \psi}_n$ and Lemma~\ref{lem:infor}, we get
$$
0\;\leq\;\tfrac{1}{2}\|\widecheck{\bs \psi}_n-\widehat{\bs \psi}_n\|^2_{\widehat J_n}+o_P(\|\widecheck{\bs \psi}_n-\widehat{\bs \psi}_n\|^2)\;\leq\;\tfrac{1}{2}\|\widehat{\bs \psi}_n-\bs\psi_0\|^2_{\widehat J_n}+o_P(\|\widehat{\bs \psi}_n-\bs\psi_0\|^2).
$$
Here we have also used the basic fact that $R(h)=o(\|h\|^2)$ implies $R(U_n)=o_P(\|U_n\|^2)$; cf.\ Lemma~2.12 in \cite{vandervaart}. 
Multiply this inequality by $n$ so that the right hand side becomes
$$
\tfrac{1}{2}\|\sqrt{n}(\widehat{\bs \psi}_n-\bs\psi_0)\|^2_{\widehat J_n}+o_P(\|\sqrt{n}(\widehat{\bs \psi}_n-\bs\psi_0)\|^2).
$$ Since $\sqrt{n}(\widehat{\bs \psi}_n-\bs \psi_0)$ is bounded in probability, 
 so is $\|\sqrt{n}(\widehat{\bs \psi}_n-\bs\psi_0)\|^2$. 
 Using \eqref{eq:j2} and the fact that $o_P(O_P(1))=o_P(1)$ we conclude that the right hand side is bounded in probability and so the left hand side is bounded too, that is, $\sqrt{n}(\widecheck{\bs\psi}_n-\widehat{\bs \psi}_n)=O_P(1)$. The triangle inequality now implies $\sqrt{n}(\widecheck{\bs\psi}_n-\bs \psi_0)=O_P(1)$, which concludes the proof.
\end{proof}

We are now ready to prove the Theorem. 
\begin{proof}[Proof of Theorem~\ref{th:equal}]
Since, by Corollary~\ref{cor:indistr}, $\sqrt{n}(\widetilde{\bs\psi}_n-\bs\psi_0)$ converges in distribution, it remains to show that $\sqrt{n}(\widetilde{\bs\psi}_n-\widecheck{\bs \psi}_n)=o_P(1)$; see \cite[Theorem 3]{mann1943stochastic}. The standard local first order conditions for optimality of $\widetilde{\bs \psi}_n$ are
\begin{equation}\label{eq:optmuhat}
    \nabla F_n(\widetilde{\bs \psi}_n)^T\cdot (\bs \psi-\widetilde{\bs \psi}_n)\;\geq\;0\qquad \mbox{for all }\bs \psi\in M_u'\times \Theta_v',
\end{equation}
expressing that the directional derivative in any feasible direction must be non-negative.
First-order Taylor  expansion at $\overline{\bs\psi}_n$ gives
$$
\nabla F_n(\widetilde{\bs \psi}_n)=J_n\cdot (\widetilde{\bs \psi}_n-\overline{\bs \psi}_n)+o_P(\|\widetilde{\bs \psi}_n-\overline{\bs \psi}_n\|),
$$
where we used Lemma~\ref{lem:infor} and the fact that $\nabla F_n(\overline{\bs \psi}_n)=0$. After multiplying by $\sqrt{n}$ the last term becomes $o_P(\sqrt{n}\|\widetilde{\bs \psi}_n-\overline{\bs \psi}_n\|)$. By the triangle inequality and Lemma~\ref{lem:ratecheck}, we have $\sqrt{n}\|\widetilde{\bs \psi}_n-\overline{\bs \psi}_n\|\leq \sqrt{n}\|\widetilde{\bs \psi}_n-\bs \psi_0\|+\sqrt{n}\|\bs\psi_0-\overline{\bs \psi}_n\|= O_P(1)$. 
Using the fact that $o_P(O_P(1))=o_P(1)$, we get
\begin{equation}\label{eq:taylorexpress}
\sqrt{n}\nabla F_n(\widetilde{\bs \psi}_n)=J_n\cdot \sqrt{n}(\widetilde{\bs \psi}_n-\overline{\bs \psi}_n)+o_P(1).
\end{equation}
Multiply \eqref{eq:optmuhat} by $n$ and insert the expression in (\ref{eq:taylorexpress}) for  $\sqrt{n}\nabla F_n(\widetilde{\bs \psi}_n)$   to conclude that for all $\bs \psi\in M_u'\times \Theta_v'$
\begin{equation}\label{eq:optmuhat2}
   \sqrt{n}(\widetilde{\bs \psi}_n-\overline{\bs \psi}_n)^T\cdot J_n\cdot \sqrt{n}(\bs \psi-\widetilde{\bs \psi}_n)+\|\sqrt{n}({\bs \psi}-\widetilde{\bs\psi}_n)\|o_P(1)\;\geq\;0.
\end{equation}
Here we used a basic fact that if $\bs r_n$ is a sequence of random vectors in a finite dimensional vector space such that each coordinate is $o_P(1)$ and $\bs q_n$ is another sequence of random vectors then $\bs r_n^T \bs q_n\leq\|\bs q_n\| o_P(1)$.

Similarly, local optimality conditions for $\widecheck{\bs \psi}_n$ are
$$
\nabla \widecheck F_n(\widecheck{\bs \psi}_n)^T\cdot (\bs \psi-\widecheck{\bs \psi}_n)\;\geq \;0\qquad \mbox{for all }\bs \psi\in M_u'\times \Theta_v.
$$
 By Lemma~\ref{lem:infor}, we have $\nabla \widecheck F_n(\widehat{\bs \psi}_n)=0$ and $\nabla^2 \widecheck F_n(\widehat{\bs \psi}_n)=\widehat J_n=J_n+o_P(1)$ (cf. \eqref{eq:Jhat}), which gives that 
$$
\nabla \widecheck F_n(\widecheck{\bs \psi}_n)\;=\;J_n\cdot (\widecheck{\bs \psi}_n-\widehat{\bs \psi}_n)+o_P(\|\widecheck{\bs \psi}_n-\widehat{\bs \psi}_n\|),
$$
and so, again using Lemma~\ref{lem:ratecheck}, the first order optimality conditions for $\widecheck{\bs \psi}_n$, we get that for all $\bs \psi\in M_u'\times \Theta_v$
\begin{equation}\label{eq:optmucheck}
   \sqrt{n}(\widecheck{\bs \psi}_n-\widehat{\bs \psi}_n)^T\cdot J_n\cdot \sqrt{n}(\bs \psi-\widecheck{\bs \psi}_n)+\|\sqrt{n}(\widecheck{\bs \psi}_n-\bs\psi)\|o_P( 1)\;\geq\;0.
\end{equation}
Note also that the optimality conditions for $\widehat{\bs\psi}_n$ are
$$
\nabla F_n(\widehat{\bs\psi}_n)^T\cdot (\psi-\widehat{\bs \psi}_n)\geq 0\qquad\mbox{for all }\bs\psi\in M_u\times \Theta'_v.
$$
Like in the previous two cases, we argue that for all $\bs\psi\in M_u\times \Theta'_v$
\begin{equation}\label{eq:optmuhat3}
   \sqrt{n}(\widehat{\bs \psi}_n-\overline{\bs \psi}_n)^T\cdot J_n\cdot \sqrt{n}(\bs \psi-\widehat{\bs \psi}_n)+\|\sqrt{n}({\bs \psi}-\widehat{\bs\psi}_n)\|o_P(1)\;\geq\;0.
\end{equation}

The rest of the proof relies on using \eqref{eq:optmuhat2}, \eqref{eq:optmucheck}, and \eqref{eq:optmuhat3} evaluated at various points $\bs\psi$ in order to show that $\sqrt{n}(\widetilde{\bs\psi}_n-\widecheck{\bs \psi}_n)=o_P(1)$. We now drop dependence on $n$ from our notation to keep it simple. We will exploit the fact that  $J$ is positive definite and has a block-diagonal form with blocks, which we denote by $J_{uu}$ and $J_{vv}$ but the exact form of these blocks given in Lemma~\ref{lem:infor} is irrelevant here.  

Insert $(\widecheck{\bs \mu}_{u},\widetilde{\bs\theta}_v)$ for $\bs\psi$ in \eqref{eq:optmuhat2} and $(\widetilde{\bs \mu}_{u},\widecheck{\bs\theta}_v)$ for $\bs\psi$ in \eqref{eq:optmucheck} noting that both  are valid points in $M_u'\times \Theta'_v$; then add both expressions  and use that the $\bs u$-block of  $\overline{\bs\psi}_n$ and $\widehat{\bs\psi}_n$ are both equal to $\bs u_n$ (cf.\ Theorem~\ref{th:identity}). From this calculation we 
get
$$
-\|\sqrt{n}(\widecheck{\bs \mu}_u-\widetilde{\bs \mu}_u)\|^2_{J_{uu}}+ \|\sqrt{n}(\widecheck{\bs \mu}_u-\widetilde{\bs\mu}_u)\|\cdot o_P(1)\;\geq\;0.$$
 This can be rewritten as  
$$
0\leq \|\sqrt{n}(\widecheck{\bs \mu}_u-\widetilde{\bs \mu}_u)\|^2_{J_{uu}}\leq  \|\sqrt{n}(\widecheck{\bs \mu}_u-\widetilde{\bs\mu}_u)\|\cdot o_P(1).$$
Using a version of \eqref{eq:j1} for $J_{uu}$, we conclude that $\sqrt{n}(\widetilde{\bs \mu}_u-\widecheck{\bs \mu}_u)=o_P(1)$. 

Equation \eqref{eq:optmuhat2} evaluated at  $\bs\psi=(\widetilde{\bs \mu}_{u},\widecheck{\bs\theta}_v)\in M_u'\times \Theta_v'$ (note that $\widecheck{\bs\theta}_v=\widehat{\bs\theta}_v$ by Theorem~\ref{th:identity}) yields
$$
  \sqrt{n}(\widetilde{\bs \theta}_v-\overline{\bs \theta}_v)^T\cdot J_{vv}\cdot \sqrt{n}(\widehat{\bs \theta}_v-\widetilde{\bs \theta}_v)+\|\sqrt{n}(\widehat{\bs \theta}_v-\widetilde{\bs\theta}_v)\|o_P(1)\;\geq\;0.
$$
Similarly, insert $\bs\psi=(\bs u,\widetilde{\bs\theta}_v)=(\widehat{\bs\mu}_u,\widetilde{\bs\theta}_v)\in M_u\times \Theta_v'$  into \eqref{eq:optmuhat3} to get
$$
\sqrt{n}(\widehat{\bs\theta}_v-\overline{\bs\theta}_v)^T \cdot J_{vv}\cdot \sqrt{n}(\widetilde{\bs\theta}_v-\widehat{\bs\theta}_v)+\|\sqrt{n}(\widetilde{\bs\theta}-\widehat{\bs\theta}_v)\|o_P(1)\geq 0. 
$$
Adding these two inequalities yields 
$$
0\leq \|\sqrt{n}(\widecheck{\bs \theta}_v-\widetilde{\bs \theta}_v)\|^2_{J_{vv}}\leq  \|\sqrt{n}(\widecheck{\bs \theta}_v-\widetilde{\bs\theta}_v)\|\cdot o_P(1).$$
Using a version  of \eqref{eq:j1} for  $J_{vv}$, we conclude that $\sqrt{n}(\widetilde{\bs \theta}_v-\widecheck{\bs \theta}_v)=o_P(1)$. This finally gives that $\sqrt{n}(\widetilde{\bs \psi}_n-\widecheck{\bs \psi}_n)=o_P(1)$, as desired.
\end{proof}

\section{The single-linkage matrix}\label{app:Zmatrix}

We first define the single-linkage matrix of a covariance matrix $S$. Let $R$ be a symmetric $p\times p$ positive semidefinite matrix such that $R_{ii}=1$ for all $i=1,\ldots,p$. In our case $R$ will be the corresponding correlation matrix of $S$.  Consider the graph $G^+$ over $V=\{1,\ldots,d\}$ with an edge between $i$ and $j$ whenever $R_{ij}>0$. Assign to each edge the corresponding positive weight $R_{ij}$ and note that $G^+$ in general does not have to be connected. Define a $d\times d$ matrix $Z$ by setting $Z_{ii}=1$ for all $i$ and
\begin{equation}\label{eq:Z}
	Z_{ij}\;\;:=\;\;\max_{P} \min_{uv\in P} R_{uv}, 
	\end{equation}
where the maximum is taken over all paths $P$ in $G^+$ between $i$ and $j$ and is set to zero if no such path exists. We call $Z$ the \emph{single-linkage matrix} of $R$.
\begin{prop}[\cite{LUZ}]\label{prop:Z}
	Let $R$ be a symmetric $d\times d$ positive semidefinite matrix satisfying $R_{ii}=1$ for all $i=1,\ldots,p$. Then the single-linkage matrix $Z$ of $R$ has ones on the diagonal and satisfies $Z\geq R$. If, in addition,  $R_{ij}<1$ for all $i\neq j$, then $Z$ is an inverse M-matrix.
\end{prop}
Now if $S$ is a symmetric positive semidefinite matrix with strictly positive entries on the diagonal and such that $S_{ij}<\sqrt{S_{ii}S_{jj}}$. Then, by Proposition~\ref{prop:Z}, there exists an inverse M-matrix $Z$ such that $Z\geq S$ and $Z_{ii}=S_{ii}$ for all $i=1,\ldots,d$, obtained by appropriate scaling of the correlation matrix $R$ of $S$. This matrix is called the {single-linkage matrix} of $S$. If $S$ is a sample covariance matrix based on at least two observations, the single-linkage matrix is always positive definite with probability one.  We are now ready to show that the positive graphical lasso estimate exists for $n\geq 2$ observations.

\begin{proof}[Proof of Theorem~\ref{th:plassoexists}]

We must construct a feasible point $\Sigma^0$ of the dual problem \eqref{eq:pospenalRdual} in the case when $L$ is the zero matrix. 
If $S$ is positive definite, we can take $\Sigma^0=S$ so assume that $S$ is rank deficient. The single-linkage matrix $Z$ of $S$ by construction satisfies $Z\geq S$. If the entries of $U$ are sufficiently large then $S+L\leq Z\leq S+U$ and so $Z$ is dually feasible. If $Z$ is not upper bounded by $S+U$ we proceed as follows. Let  $\rho=\min_{i\neq j}U_{ij}$. Since $\|Z-S\|_\infty>0$, we can define 
\begin{equation}\label{eq:Sigma0}
\Sigma^0\;:=\;(1-t^*)S+t^* Z,\quad\mbox{ where } t^*\;=\;\min\left\{1,\frac{\rho}{\|Z-S\|_\infty}\right\}
\end{equation}
which lets $\Sigma^0=Z$ if $\rho=\infty$. Then $\Sigma^0\geq S$ and it is equal to $S$ on the diagonal. Moreover, 
$$\|\Sigma^0-S\|\;=\;t^*\|Z-S\|_\infty\;\leq\;\rho$$
and hence $\Sigma^0$ is dually feasible. Since $\Sigma_0$ is dually feasible, the optimum exists. This concludes the proof.
\end{proof}

Finally, we comment briefly on computational issues. Computing Z can be done efficiently using the link of this construction to single-linkage clustering; cf.\ \cite[Proposition 3.7]{LUZ}. More precisely, we first take the corresponding correlation matrix $R$ and form a dissimilarity matrix $D$, where 
$$
D_{ij}\;=\;\begin{cases}
-\log R_{ij} & \mbox{if }R_{ij}>0\\
\infty & \mbox{otherwise}.
\end{cases}
$$
By construction $D_{ii}=0$ for all $i\in V$. We then run the single linkage clustering algorithm on $D$. The time complexity of this step is $O(d^2)$. The R function \texttt{hclust} by default does not return the underlying ultrametric matrix $\widehat D$ of distances but this information can be recovered from the standard output with a bit of work. Now the single-linkage matrix of $R$ is simply
$$
Z_{ij}\;=\; \exp(-\widehat D_{ij}).
$$
The single linkage matrix of $S$ is obtained by rescaling the matrix $Z$ with the diagonal entries of $S$.
\end{appendix}

\begin{acks}[Acknowledgments]
 We are grateful to Robert Castelo for providing us with an interesting dataset and for patiently explaining the underlying biology to us.
\end{acks}
\begin{funding}
 The second author was supported in part by  the Spanish Government grants (RYC-2017-22544,PGC2018-101643-B-I00), and Ayudas Fundaci\'on BBVA a Equipos de Investigaci\'on Cientifica 2017.
\end{funding}

\bibliographystyle{imsart-number}
\bibliography{bib_mtp2}
\end{document}